\newtheorem{theorem}{Theorem}[section]
\newtheorem{definition}[theorem]{Definition}
\newtheorem{lemma}[theorem]{Lemma}
\newtheorem{remark}[theorem]{Remark}
\newtheorem{example}[theorem]{Example}
\newcommand{\keyname}{\@capital{data-informed}}
\newcommand{\state}{u}
\newcommand{\amodel}{A}
\newcommand{\lmodel}{L}
\newcommand{\nmodel}{f}
\newcommand{\omodel}{C}
\newcommand{\forward}{F}
\newcommand{\forwardr}{\tilde{F}}
\newcommand{\forwarddr}{\hat{F}}
\newcommand{\misfit}{\eta}
\newcommand{\misfitr}{\tilde{\eta}}
\newcommand{\param}{x}
\newcommand{\data}{y_{\rm obs}}
\newcommand{\mout}{y}
\newcommand{\error}{e}
\newcommand{\hessian}{H}
\newcommand{\map}{\param_{\rm MAP}}
\newcommand{\paramspace}{\mathbb{X}}
\newcommand{\statespace}{\mathbb{U}}
\newcommand{\dataspace}{\mathbb{Y}}
\newcommand{\real}{\mathbb{R}}
\newcommand{\cs}{\mathbb{X}_\perp}
\newcommand{\post}{\pi}
\newcommand{\prior}{\pi_0}
\newcommand{\lkd}{\mathcal{L}}
\newcommand{\papost}{\hat{\pi}}
\newcommand{\japost}{\tilde{\pi}}
\newcommand{\lap}{\pi_{\rm L}}
\newcommand{\ray}{\mathcal{R}}
\newcommand{\normal}{\mathcal{N}}
\newcommand{\prmean}{\mu_{\rm pr}}
\newcommand{\prcov}{\Gamma_{\rm pr}}
\newcommand{\pocov}{\Gamma_{\rm pos}}
\newcommand{\obscov}{\Gamma_{\rm obs}}
\newcommand{\prinner}{\langle \cdot , \cdot \rangle_{\prcov}}
\newcommand{\obsinner}{\langle \cdot , \cdot \rangle_{\obscov}}
\newcommand{\sbasis}{V}
\newcommand{\fbasis}{\Theta}
\newcommand{\mask}{P}
\newcommand{\dbasis}{Y}
\newcommand{\plocalf}{\varphi}
\newcommand{\plocale}{\gamma}
\newcommand{\pbasis}{\Phi}
\newcommand{\pglobalf}{\phi}
\newcommand{\pglobale}{\lambda}
\newcommand{\projlis}{\Pi_r}
\newcommand{\projcs}{\Pi_\perp}
\newcommand{\diag}{{\rm diag}}
\newcommand{\expect}{\mathbb{E}}
\newcommand{\meter}{{\rm m}}
\newcommand{\unittime}{{\rm day}}
\DeclareMathOperator{\range}{range}
\DeclareMathOperator*{\argmax}{arg \, max}
\DeclareMathOperator{\erf}{erf}
\DeclareMathOperator{\rank}{rank}
\DeclareMathOperator{\vspan}{span}
\DeclareMathOperator{\mydim}{dim}
\newcommand{\klpod}{\texttt{Prior-KL-POD}} 
\newcommand{\prdi}{\texttt{Prior-Joint}}
\newcommand{\lapdi}{\texttt{Laplace-Joint}}
\newcommand{\podi}{\texttt{Posterior-Joint}}
\newcommand{\updated}[1]{#1}
\journal{Journal of Computational Physics}
\begin{document}


\begin{frontmatter}

  \title{Scalable posterior approximations for large-scale Bayesian
    inverse problems via likelihood-informed parameter and state
    reduction}
    
  \author[mit]{Tiangang Cui}\ead{tcui@mit.edu}

  \author[mit]{Youssef Marzouk\corref{cor1}}\ead{ymarz@mit.edu}
   
  \author[mit]{Karen Willcox}\ead{kwillcox@mit.edu}

  \address[mit]{Massachusetts Institute of Technology, Cambridge, MA
    02139, USA}

  \cortext[cor1]{Corresponding author}

  \begin{abstract}
    Two major bottlenecks to the solution of large-scale Bayesian
    inverse problems are the scaling of posterior sampling algorithms
    to high-dimensional parameter spaces and the computational cost of
    forward model evaluations. Yet incomplete or noisy data, the state
    variation and parameter dependence of the forward model, and
    correlations in the prior collectively provide useful structure
    that can be exploited for dimension reduction in this
    setting---both in the parameter space of the inverse problem
    \textit{and} in the state space of the forward model.
    To this end, we show how to jointly construct low-dimensional
    subspaces of the parameter space and the state space in order to
    accelerate the Bayesian solution of the inverse problem.
    \updated{As a byproduct of state dimension reduction, we also
      show how to identify low-dimensional subspaces of the data in
      problems with high-dimensional observations.}
    These subspaces enable approximation of the posterior as a product
    of two factors: (i) a projection of the posterior onto a
    low-dimensional parameter subspace, wherein the original
    likelihood is replaced by an approximation involving a reduced
    model; and (ii) the marginal prior distribution on the
    high-dimensional complement of the parameter subspace.
    We present and compare several strategies for constructing these
    subspaces using only a limited number of forward and adjoint model
    simulations. \updated{The resulting posterior approximations can
      rapidly be characterized using standard sampling techniques,
      e.g., Markov chain Monte Carlo.}
    Two numerical examples demonstrate the accuracy and efficiency of
    our approach: inversion of an integral equation in atmospheric
    remote sensing, where the data dimension is very high; and the
    inference of a heterogeneous transmissivity field in a groundwater
    system, which involves a partial differential equation forward
    model with high dimensional state and parameters.

  \end{abstract}

  \begin{keyword}
    {Inverse problems}, {Bayesian inference}, {dimension reduction},
    {model reduction}, {low-rank approximation}, {Markov chain Monte
      Carlo}
  \end{keyword}
  
\end{frontmatter}


\section{Introduction}
\label{sec:intro}

Inverse problems convert indirect observations into useful
characterizations of the parameters of a physical system. These
parameters are related to the observations by a forward model, which
often is expressed as a system of ordinary or partial differential
equations (PDEs) or as an integral equation.
Observations are inevitably corrupted by noise, and the unknown model
parameters may be high-dimensional or infinite-dimensional in
principle. Solution of the inverse problem is thus classically
ill-posed: many feasible realizations of the parameters may be
consistent with the data, and small perturbations in the data may lead
to large perturbations in unregularized parameter estimates.
Rather than seeking regularized point estimates, the Bayesian approach
\cite{IP:Tarantola_2004, IP:KaiSo_2005, IP:Stuart_2010} casts the
inverse solution as the \textit{posterior probability distribution} of
the model parameters conditioned on data, and introduces
regularization in the form of prior information. It thus provides a
means of combining prior knowledge, the data and forward model, and a
stochastic description of measurement and/or model errors; the result
is a principled quantification of uncertainty in parameters and in
parameter-dependent predictions. Characterizing the posterior,
however, is in general a computationally challenging task. The
workhorses of Bayesian computation in this context are Markov chain
Monte Carlo (MCMC) methods
\cite{MCMC:GRP_1996,MCMC:Liu_2001,MCMC:BGJM_2011}, originating with
the Metropolis-Hastings algorithm
\cite{MCMC:Metropolis_etal_1953,MCMC:Hastings_1970}.

A central challenge in the application of MCMC methods to inverse
problems is poor scaling of computational effort with
\textit{parameter dimension} and with the \textit{size of the forward
  model}. High-dimensional parameters frequently represent the
discretization of a spatial field (e.g., the permeability field of a
porous medium) that is the target of inference. Yet the efficiency of
many standard MCMC methods degrades with parameter dimension
\cite{MCMC:RGG_1997, MCMC:RoRo_1998, MCMC:RoRo_2001, MCMC:MPS_2012,
  MCMC:PST_2012}; longer mixing times for MCMC chains then demand more
posterior evaluations to estimate posterior expectations with any
given accuracy. Similarly, many forward models of interest have
high-dimensional states---resulting, for instance, from finite-element
discretizations of PDEs, where many degrees of freedom are needed to
resolve the relevant physics accurately. The {computational expense}
of each forward model evaluation scales {at least} linearly
with the dimension of model state (e.g., when the solution of a linear
system is required). Another important but often-neglected
computational expense in MCMC methods is the proposal process: the
cost of generating random variables and calculating the candidate step
often scales at least linearly with the parameter dimension.

To overcome these twin challenges---parameter dimension and forward
model cost---this paper proposes a likelihood-informed approach for
identifying and exploiting low-dimensional structure in both the
parameter space and the model state space of inverse problems.
Our approach integrates and extends two lines of research: the
likelihood-informed \textit{parameter} dimension reduction of
\cite{DimRedu:Spantini_etal_2015, DimRedu:Cui_etal_2014} and the
data-driven \textit{model} reduction of \cite{ROM:CMW_2014}.
By simultaneously considering the limited accuracy or influence of the
observations, the smoothing properties of the forward model, and the
covariance structure of the prior, the former identifies a
low-dimensional {\it likelihood-informed parameter subspace}
(LIPS)\footnote{\cite{DimRedu:Cui_etal_2014} and \cite{MCMC:CLM_2014}
  refer to this reduced parameter subspace as the
  ``likelihood-informed subspace'' or LIS. Since the present work
  introduces low-dimensional subspaces of both the parameter space and
  state space, we replace `LIS' with the more specific `LIPS' in order
  to avoid confusion.}
where the influence of the likelihood on the posterior dominates that
of the prior.
Given this subspace, one can approximate the full
posterior\footnote{The term ``full posterior'' refers to the posterior
  distribution induced by the full forward model defined on the
  original parameter space.} as the product of a low-dimensional
posterior on the LIPS and the marginalization of the prior onto the
complement of the LIPS.
The latter term can be characterized analytically or with perfectly
independent samples.
Evaluation of the posterior density restricted to the LIPS is still
computationally expensive, however, as it involves the full forward
model. Our second step accelerates these evaluations by projecting
\cite{ROM:NAP_1981, ROM:Sirovich_1987, ROM:HLB_1996, ROM:PaRo_2007}
the forward model---with input parameters restricted to the
LIPS---onto a low-dimensional state subspace. This subspace is called
the {\it likelihood-informed state subspace} (LISS), as it captures
variations in the model state associated with the LIPS-projected
posterior. This model reduction approach extends the data-driven model
reduction ideas of \cite{ROM:CMW_2014} by not only exploiting
posterior concentration, but also avoiding consideration of input
parameter directions that ultimately will not be data-informed.
Finally, we combine these approximations together: the reduced-order
model resulting from projection onto the LISS is substituted into the
product-form approximation of the posterior described above. The
resulting {\it jointly-approximated posterior} is inexpensive to
evaluate and to sample, with a computational cost that is independent
of the dimension of the full model state or the
parameters. \updated{Indeed, this cost scales only with the dimensions
  of the LIPS and the LISS, which are in a sense the \textit{intrinsic
    dimensions} of the problem.}

\updated{As a byproduct of state reduction, we describe new approaches
  for efficiently handling and reducing \textit{high-dimensional data}
  sets in inverse problems; these approaches are potentially useful in
  ``big data'' settings.  While the jointly-approximated posterior
  achieves excellent accuracy in our numerical examples, we also
  discuss how to use it as a proposal distribution in importance
  sampling or delayed-acceptance MCMC \cite{MCMC:ChriFox_2005,
    MCMC:Cui_2010} for the purpose of ``exact'' sampling---i.e., the
  computation of expectations with respect to the full posterior---if
  desired. To ensure convergence in this setting, we introduce a
  special treatment of the tails of the jointly-approximated
  posterior.}

Previous work has also investigated the idea of combining parameter
reduction with model reduction or other forms of surrogate modeling in
order to approximate posterior distributions.
One early effort is \cite{SuMo:MarNa_2009}, which constructs a reduced
parameter basis using the truncated Karhunen-L\`{o}eve (KL) expansion
\cite{DimRedu:Karhunen_1947,DimRedu:Loeve_1978} of the prior
covariance, and then uses generalized polynomial chaos expansions
\cite{SuMo:GhaSpa_1991,SuMo:XiuKar_2002,SuMo:Xiu_2010} to build a
surrogate of the full model.
The same KL-based parameter reduction technique has also been combined
with projection-based model reduction to accelerate posterior
evaluations; examples include \cite{ROM:LSK_2013} and
\cite{ROM:CMW_2014}. Similarly, \cite{SuMo:MaZa_2009} combines a
process convolution model \cite{Sto:Higdon_2002} of the parameters
with a sparse grid approximation of the forward model.
A different approach in \cite{ROM:LWG_2010} simultaneously identifies
reduced subspaces for both the parameters and the state, by solving a
sequence of model-constrained optimization problems penalized by the
prior.
In general, all these earlier approaches seek a truncation of the
parameter dimension and then accelerate forward model evaluations over
the reduced parameter subspace using surrogates.
The smoothness of the prior plays a crucial role in these approaches,
either for avoiding large KL truncation errors or for promoting
convergence of the model-constrained optimization.
In practice, this requirement can impose significant restrictions on
the choice of priors.
Our approach is fundamentally different in several respects.
First, our posterior approximation is based on capturing the
\textit{change} from the prior to the posterior within the LIPS,
rather than directly truncating the parameter dimension of the
problem.
Second, model reduction using the LISS only captures the state
variations that are relevant to the change from prior to posterior;
this ``localization'' strategy is key to the successful construction
of reduced-order models for high-dimensional parameterized systems.
Furthermore, our jointly-approximated posterior is not singular with
respect to the full posterior, and can thus be used to drive exact
sampling schemes (e.g., importance sampling as mentioned above).

The rest of this paper is organized as follows.
In Section \ref{sec:background}, we review the Bayesian formulation of
inverse problems.
In Section \ref{sec:methodology}, we introduce the concept of joint
posterior approximation using reduced parameter and state subspaces,
then detail various strategies and practical algorithms for
constructing this approximation and for exploring the full posterior.
In Section \ref{sec:gomos}, we demonstrate various aspects of our
proposed approach using an atmospheric remote sensing problem,
comparing different strategies for subspace identification and for the
reduction of high-dimensional data.
In Section \ref{sec:elliptic}, we apply our joint posterior
approximation approach to the inference of the transmissivity field of
a groundwater aquifer.
Section \ref{sec:conclusions} offers concluding remarks.
  



\section{Bayesian formulation for inverse problems}
\label{sec:background}

We begin by constructing the forward model. Consider a numerical
discretization of the system of interest, described by a nonlinear
equation
\begin{equation}
  \amodel(\state, \param) = 0,
  \label{eq:amodel}
\end{equation}
where $\state \in \statespace \subseteq \real^{m}$ and
$\param \in \paramspace \subseteq \real^{n}$ are the $m$-dimensional
state vector and the $n$-dimensional parameter vector, respectively.
The goal of an inverse problem is to infer the unobservable parameters
$\param$ from noisy partial observations of the states $\state$, given by
\begin{equation}
  \data = \omodel(\state, \param) + \error \, .
  \label{eq:omodel}
\end{equation}
Here $\omodel$ is a discretized observation operator mapping from the
states and parameters to the observables, and $\error$ is a random
variable representing noise and/or model error, which appear additively.
The system model $\amodel(\state, \param) = 0$ and observation model
$\omodel(\state, \param)$ together define a forward model
$\mout = \forward(\param)$ that maps the unknown parameter to the
observable model outputs.
We note that although the forward model defined by \eqref{eq:amodel}
and \eqref{eq:omodel} is induced by a stationary problem, the
methodology presented in this paper is also applicable to
time-dependent systems.%

To formulate the inverse problem in a Bayesian setting, we model the
parameter $\param$ as a random variable, endow it with a prior
distribution, and then characterize its posterior distribution
given a realization of the data,
$\data \in \dataspace \subseteq \real^{d}$:
\begin{equation}
  \label{eq:post}
  \post(\param \vert \data) \propto \lkd( \data \vert \param )  \prior(\param).
\end{equation}
Here, we assume that all distributions have densities with respect to Lebesgue measure.
The posterior density above is the product of two terms: the prior
density $\prior(\param)$, which models knowledge of the parameters
before the data are observed, and the likelihood function
$ \lkd( \data \vert \param )$, which describes the probability
distribution of $\data$ for a given $\param$.

We develop our formulation in the setting of a multivariate Gaussian
prior $\normal(\prmean, \prcov)$, where the covariance matrix $\prcov$
might also be specified by its inverse $\prcov^{-1}$, commonly
referred to as the precision matrix. The additive observational noise
is taken to be a zero mean Gaussian distribution, i.e.,
$\error \sim \normal(0, \obscov)$.
Given the weighted inner product
$\langle \mout_1 , \mout_2 \rangle_{\obscov} = \langle \mout_1 ,
\obscov^{-1} \mout_2 \rangle$
and the induced norm
$ \| \mout \|_{\obscov} = \sqrt{\langle \mout , \mout
  \rangle_{\obscov}} $, we can define a data-misfit function
\begin{equation}
  \misfit(\param) = \frac{1}{2}  \left\|  \forward(\param) - \data  \right\|^2_{\obscov}.
  \label{eq:misfit}
\end{equation}
The likelihood function is thus proportional to $\exp \left( -\misfit(\param) \right )$.
The Gaussian settings used here can be generalized to non-Gaussian
priors, e.g., log-normal distributions, with an appropriate
transformation or change of variables to a Gaussian. Additive but
non-Gaussian noise can be handled similarly. These transformations may
introduce additional nonlinearity in the forward model.

Note that the unknown parameters and the model states are in principle
functions of space and/or time, and that the finite-dimensional
representations above are the result of numerical discretization.
If one considers progressively refining the parameter discretization,
however, the posterior distribution does not have a density with
respect to Lebesgue measure at the infinite-dimensional limit.
However, for inverse problems with properly chosen Gaussian
priors---e.g., a covariance operator that is self-adjoint, positive
definite, and trace-class---and a forward model satisfying certain
regularity conditions---e.g., appropriately bounded
\cite{IP:Stuart_2010} and locally Lipschitz---the posterior has a
density with respect to the prior and yields a full measure at the
infinite-dimensional limit.
In this case, Bayes' rule in \eqref{eq:post} is expressed as the
Radon-Nikodym derivative of the posterior with respect to the
prior. We refer the readers to \cite{IP:Stuart_2010} and references
therein for more details.
Since we aim to approximate the posterior distribution defined by
given discretizations of the parameters, a finite-dimensional
forward model, and the associated prior, we adopt the
finite-dimensional representation of the posterior as our starting
point in this paper.
This finite-dimensional posterior can be derived either from a
consistent discretization of an infinite-dimensional inverse problem
or from some other existing numerical models that are not necessarily
well-defined in the infinite-dimensional limit.



\section{Posterior approximation via dimension reduction}
\label{sec:methodology}

In this section, our first objective is to reduce the algorithmic
complexity of posterior sampling by identifying a likelihood-informed
parameter subspace (LIPS) that captures parameter directions where the
change from prior to posterior is most significant. 
We will then decompose the posterior into the product of (i) a
low-dimensional distribution, defined on the LIPS, that is analytically
intractable and therefore must be sampled; and (ii) a higher-dimensional
but analytically tractable marginal of the prior distribution on the
complement of the LIPS.
To accelerate the forward model evaluations required when sampling the
first term of this product decomposition, we will identify a
low-dimensional subspace of the forward model state---the
likelihood-informed state subspace (LISS)---and construct a reduced
version of the forward model accordingly. Introducing this reduced
model into the product decomposition yields the jointly-approximated
posterior distribution described in the introduction. 
At the end of this section, we will describe and compare various
sampling strategies for constructing the LISS and LIPS, and hence the
jointly-approximated posterior. We will also discuss methods for
exploring the full posterior distribution, given the preceding
approximations.


\subsection{Data-informed parameter reduction}
\label{sec:lips}

Inverse problems very often involve some combination of a smoothing
forward model, limited or noisy observations, and correlations in the
prior. When any of these factors is present, the data will not equally
inform all directions in the parameter space. We may be able to
explicitly project the argument of the likelihood function onto a
lower-dimensional subspace of the parameter space without losing much
information.
Our objective here is to find an $r$-dimensional LIPS, denoted by
$\paramspace_r$, to capture the parameter directions where the
likelihood is ``most informative'' relative to the prior. This notion will
be defined more precisely below.

\subsubsection{Parameter-reduced posterior}

Consider a rank-$r$ projector $\projlis$ whose range is the LIPS, i.e.,
$\paramspace_r = \range(\projlis)$. We approximate the
posterior density \eqref{eq:post} as:
\begin{equation}
  \label{eq:approx_post_param_0}
  \papost(\param \vert \data) 
\propto  \lkd\left(\data  \vert  \projlis \param\right) \prior(\param), 
\end{equation}
where $ \lkd\left(\data \vert \projlis \param\right)$ is an
approximation to the original likelihood function
$ \lkd \left ( \data \vert \param \right )$.
We require the projector $\projlis$ to be orthogonal with respect to
the inner product induced by the prior covariance
$\langle \param_1 , \param_2 \rangle_{\prcov} = \langle \param_1 ,
\prcov^{-1} \param_2 \rangle$.
This requirement is essential to constructing a tractable product-form
approximation of the posterior.

\begin{definition}[Parameter space projectors]
  \label{def:proj}
  Suppose the subspace $\paramspace_r$ is spanned by a basis
  $\pbasis_r \in \real^{n \times r}$ that is orthogonal with respect
  to the inner product $\prinner$, i.e.,
  $\langle \pbasis_r , \pbasis_r \rangle_{\prcov} = I_r$, where $I_r$
  is the $r$-by-$r$ identity matrix.
  Define the matrix $\Xi_r \in \real^{n \times r}$ such that
  $\Xi_r^\top \pbasis_r^{} = I_r^{}$. Then the projectors
  \[
  \projlis^{} = \pbasis_r^{} \Xi_r^\top \quad {\rm and} \quad \projcs
  = I - \projlis,
  \]
  are orthogonal with respect to $\prinner$.
  Moreover, the $(n-r)$-dimensional subspace
  $\paramspace_\perp = \range(\projcs)$ is the orthogonal complement
  of $\paramspace_r$ with respect to $\prinner$.
  We can choose a basis $\pbasis_\perp \in \real^{ n \times (n-r) }$
  such that $[ \pbasis_r , \pbasis_\perp ]$ forms a complete
  orthogonal system in $\real^n$ 
  with respect to $\prinner$, and thus the projector
  $\projcs$ can be written as
  $\projcs^{} = \pbasis_\perp^{} \Xi_\perp^\top$, where
  $ \Xi_\perp \in \real^{ n \times (n-r) }$ is the matrix such that
  $\Xi_\perp^\top \pbasis_\perp^{} = I_\perp^{}$.
\end{definition}

Using the projectors defined above, the parameter $\param$ can be
decomposed as $ \param = \projlis \param + \projcs \param $,
where each projection can be represented as the linear combination of
the corresponding basis vectors.
Consider parameters $\param_r$ and $\param_\perp$ that are the weights
associated with the bases $\pbasis_r$ and $\pbasis_\perp$,
respectively.
Then we can define the following pair of linear transformations
between $\param$ and $(\param_r, \param_\perp)$:
\begin{equation}
  \param = \left[\pbasis_r \; \pbasis_\perp\right]
  \left[\begin{array}{l} \param_r \\ \param_\perp \end{array}\right]
  \quad {\rm and} \quad
  \left[\begin{array}{l} \param_r \\ \param_\perp \end{array}\right] = \left[\Xi_r \; \Xi_\perp \right]^\top \param.
  \label{eq:decom}
\end{equation}

\begin{lemma}
  \label{lemma:prior}
  Given the decomposition
  $\param = \pbasis_r \param_r + \pbasis_\perp \param_\perp$ defined in
  \eqref{eq:decom}, the prior distribution can be decomposed into the
  product form
  $\prior(\param) \propto \prior(\param_r) \prior(\param_\perp)$,
  where $\prior(\param_r) = \normal(\Xi_r^\top \prmean, I_r^{})$ and
  $\prior(\param_\perp) = \normal(\Xi_\perp^\top \prmean,
  I_\perp^{})$.
\end{lemma}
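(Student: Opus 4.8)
The plan is to reduce the claim to a single linear change of variables in a Gaussian density followed by a decoupling of the resulting quadratic form. First I would assemble the full basis $W = [\pbasis_r\;\pbasis_\perp]\in\real^{n\times n}$ and the full dual matrix $\Xi = [\Xi_r\;\Xi_\perp]\in\real^{n\times n}$. From Definition \ref{def:proj}, the fact that $[\pbasis_r,\pbasis_\perp]$ is a complete $\prinner$-orthonormal system gives the Gram identity $W^\top\prcov^{-1}W = I_n$, hence also $\prcov = WW^\top$. From $\projlis + \projcs = I$ together with the structural forms $\projlis = \pbasis_r\Xi_r^\top$ and $\projcs = \pbasis_\perp\Xi_\perp^\top$, multiplying on the right by $\pbasis_r$ and by $\pbasis_\perp$ and using that $\pbasis_r,\pbasis_\perp$ have full column rank yields the cross-orthogonality relations $\Xi_\perp^\top\pbasis_r = 0$ and $\Xi_r^\top\pbasis_\perp = 0$; combined with $\Xi_r^\top\pbasis_r = I_r$ and $\Xi_\perp^\top\pbasis_\perp = I_\perp$ this gives $\Xi^\top W = I_n$, so, both matrices being square, $\Xi^\top = W^{-1} = W^\top\prcov^{-1}$.

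Next I would substitute the change of variables \eqref{eq:decom}, i.e. $\param = W z$ with $z = (\param_r,\param_\perp)$, into the Gaussian prior density $\prior(\param)\propto\exp\big(-\tfrac12(\param-\prmean)^\top\prcov^{-1}(\param-\prmean)\big)$. The Jacobian $|\det W|$ is a nonzero constant and is absorbed into the normalizing constant. Writing $\prmean = W(W^{-1}\prmean) = W\Xi^\top\prmean$, the exponent becomes $(Wz - W\Xi^\top\prmean)^\top\prcov^{-1}(Wz - W\Xi^\top\prmean) = (z-\Xi^\top\prmean)^\top W^\top\prcov^{-1}W (z-\Xi^\top\prmean) = \|z-\Xi^\top\prmean\|^2$ by the Gram identity. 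Since $\Xi^\top\prmean = (\Xi_r^\top\prmean,\Xi_\perp^\top\prmean)$ and the Euclidean norm on $\real^n$ splits blockwise, $\|z-\Xi^\top\prmean\|^2 = \|\param_r - \Xi_r^\top\prmean\|^2 + \|\param_\perp - \Xi_\perp^\top\prmean\|^2$, so the density factors as $\prior(\param)\propto\prior(\param_r)\,\prior(\param_\perp)$ with the two factors identified as $\normal(\Xi_r^\top\prmean,I_r)$ and $\normal(\Xi_\perp^\top\prmean,I_\perp)$.

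I expect the only real subtlety to be the linear-algebra bookkeeping that makes $\Xi^\top$ exactly equal to $W^{-1}$ — that is, confirming that the partial dual matrices $\Xi_r,\Xi_\perp$ supplied by Definition \ref{def:proj} genuinely annihilate the complementary bases, so that the new coordinates decouple. Everything after that is the standard observation that a Gaussian remains Gaussian — in fact becomes standard normal — under the linear map that diagonalizes its covariance, which is precisely the role played here by the $\prinner$-orthonormal basis $[\pbasis_r,\pbasis_\perp]$.
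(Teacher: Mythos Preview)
Your argument is correct and complete: the Gram identity $W^\top\prcov^{-1}W=I_n$ together with the bookkeeping that establishes $\Xi^\top=W^{-1}$ is exactly what is needed, and the decoupling of the quadratic form follows cleanly. The paper itself states this lemma without proof, so there is no alternative argument to compare against; your write-up simply supplies the details the authors chose to omit as routine.
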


Applying the linear transformation \eqref{eq:decom} and Lemma
\ref{lemma:prior}, the \textit{parameter-approximated} posterior
\eqref{eq:approx_post_param_0} can be written as
\begin{equation}
  \papost(\param \vert \data) \propto \papost(\param_r, \param_\perp \vert \data) = \post(\param_r \vert \data) \prior(\param_\perp),
  \label{eq:approx_post_param}
\end{equation} 
which is the product of the {\it parameter-reduced} posterior
\begin{equation}
  \post(\param_r \vert \data) \propto \lkd\left(\data \vert \pbasis_r \param_r\right) \prior(\param_r),
  \label{eq:redu_post_param}
\end{equation}
and the {\it complement prior} $\prior(\param_\perp)$.
In this product-form approximation, the prior-to-posterior update is
captured entirely by the parameter-reduced posterior.
Approximations of this form naturally yield a scalable posterior exploration scheme: the high-dimensional complement
prior $\prior(\param_\perp)$ is analytically tractable, and the
remaining challenge is to explore the analytically intractable but
low-dimensional parameter-reduced posterior
\eqref{eq:redu_post_param}.
Of course, this construction rests on identifying the LIPS basis
$\pbasis_r$. In the rest of this subsection, we will discuss several
ways to construct the LIPS by balancing the influence of the prior and
the likelihood.

\begin{remark}
  It is usually not feasible to compute and store the high-dimensional
  basis $\pbasis_\perp \in \real^{n \times (n-r)}$.
  In fact, the construction of the parameter-approximated posterior
  \eqref{eq:approx_post_param} only requires the low-dimensional LIPS
  basis $\pbasis_r$ in order to construct the parameter-reduced posterior.
  Operations involving the complement subspace $\cs$ are performed
  using the projector $\projcs = I - \projlis$ rather than the
  high-dimensional basis $\pbasis_\perp$.
\end{remark}

\subsubsection{Likelihood-informed parameter subspace}

In previous work \cite{DimRedu:Spantini_etal_2015,IP:Flath_etal_2011,
  MCMC:MWBG_2012, IP:BGMS_2013, MCMC:Petra_etal_2014}, the Hessian of
the data-misfit function \eqref{eq:misfit} (in particular, the Gauss-Newton
approximation $\hessian(\param)$ of the Hessian) is used to quantify
the local impact of the likelihood, relative to the prior, along a
parameter direction $\plocalf$. This notion of relative impact is
captured via the local Rayleigh ratio
\begin{equation}
  \ray(\plocalf;\param) = \frac{\langle \plocalf, \hessian(\param) \plocalf\rangle}{
    \langle \plocalf, \prcov^{-1} \plocalf \rangle },
  \label{eq:ray_local}
\end{equation}
which is maximized (over successively smaller subspaces
$\vspan^\bot{(\plocalf_j)_{j < i}}$) by generalized eigenvectors of the matrix pencil $ (\hessian(x),
  \prcov^{-1} )$
\begin{equation}
  \hessian(\param) \plocalf_i^{} = \plocale_i^{} \prcov^{-1} \plocalf_i^{}.
  \label{eq:eig_local}
\end{equation}
The largest eigenvalues of \eqref{eq:eig_local} correspond to
parameter directions along which the local curvature of the
log-posterior density is more constrained by the log-likelihood than
by the log-prior. Conversely, eigendirections associated with the
smallest eigenvalues correspond to directions along which the
likelihood is essentially flat, and hence where the posterior is (locally)
determined by the prior.

Given the local Gauss-Newton approximation of the Hessian,
$\hessian(\param)$, we can define a local Gaussian approximation of
the posterior with covariance $\pocov(\param) := (\hessian(\param) +
\prcov^{-1})^{-1}$. This covariance can be
written as a low-rank update of the prior covariance:
\begin{equation}
  \pocov^{}(\param) \approx \pocov^{(l)} =  \prcov^{} - \sum_{i = 1}^{l} \frac{\plocale_i^{}}{\plocale_i^{}+1}\plocalf_i^{} \plocalf_i^\top \, ,
  \label{eq:local_gauss}
\end{equation}
which is in general approximate for $l < n$. Here $\plocalf_i$ and
$\plocale_i$ are eigenvectors and eigenvalues from \eqref{eq:eig_local},
which depend on the parameter $\param$.
In this approximation, the basis $\{\plocalf_1, \ldots, \plocalf_l\}$
characterizes the prior-to-posterior update at a given $\param$.

This low-rank update was first used in \cite{IP:Flath_etal_2011} for
computing and factorizing the posterior covariance in large-scale
linear inverse problems.
Spantini et al.\ \cite{DimRedu:Spantini_etal_2015} proved the \textit{optimality} of
this approximation at any given $l$, for linear inverse problems, in
the sense of minimizing the F\"{o}rstner-Moonen \cite{Lin:Forbou_2003}
distance from the exact posterior covariance matrix over the class of
positive definite matrices that are rank-$l$ negative semidefinite
updates of the prior covariance.
For nonlinear inverse problems, given the posterior mode (i.e., the
  maximum a posteriori (MAP) estimator)
\[
\map = \argmax_{x} \pi(\param \vert \data),
\]
the local Gaussian approximation \eqref{eq:local_gauss} centered at
 $\map$ yields a Laplace approximation of the posterior:\footnote{More
   precisely, this is a Laplace approximation with an additional
   approximation of the covariance as a low-rank update of the prior,
   for $l < \min \left (\rank(\hessian), \rank(\prcov) \right )$.}
\begin{equation}
  \post(\param \vert \data) \approx \lap(\param) = \normal\left(
    \map\, , \, \prcov^{} - \sum_{I = 1}^{l}
    \frac{\plocale_i^{}}{\plocale_i^{}+1}\plocalf_i^{} \plocalf_i^\top
  \right).
  \label{eq:lap}
\end{equation}
For unimodal and nearly Gaussian posteriors, the Laplace approximation
might be used directly as a surrogate for the posterior, as in
\cite{IP:BGMS_2013}; alternatively, it can be used as a fixed
preconditioner for MCMC sampling, as in the stochastic Newton method of
\cite{MCMC:Petra_etal_2014}. We will use the Laplace approximation as
a component of certain subspace construction strategies, described in
the next subsection.

We wish to construct the parameter-approximated
posterior \eqref{eq:approx_post_param} for nonlinear inverse problems,
and thus must extend beyond Gaussian approximations.
Since the Hessian varies over the parameter space for nonlinear
forward models, the likelihood-informed directions also vary with
$\param$ and are embedded in some nonlinear manifold.
To build a global linear subspace---the LIPS---that captures the
majority of this nonlinear manifold,
Cui et al. \cite{DimRedu:Cui_etal_2014} extends the pointwise criterion
\eqref{eq:ray_local} into the expected value of the Rayleigh quotient
over the posterior
\begin{equation}
  \expect_{\pi(\param \vert \data)}\left[\ray(\pglobalf; \param)\right] =  \frac{\langle \pglobalf, S_{\rm post} \, \pglobalf\rangle}{ \langle \pglobalf, \prcov^{-1} \pglobalf \rangle },
  \label{eq:ray_global}
\end{equation}
where $S_{\rm post}$ is the posterior expectation of the Gauss-Newton approximation of the
Hessian (GNH):
\begin{equation}
  S_{\rm post} = \int_{\paramspace}  \hessian(\param) \, \pi(d\param \vert \data).
  \label{eq:expect_hessian}
\end{equation}
This way, the LIPS can be obtained through the eigendecomposition of
the matrix pencil $\left(S_{\rm post} , \prcov^{-1}\right)$,
\begin{equation}
  S_{\rm post} \, \pglobalf_i^{} = \pglobale_i^{} \prcov^{-1} \pglobalf_i^{}.
  \label{eq:eig_global}
\end{equation}
The eigenvectors $\{\pglobalf_1, \ldots, \pglobalf_r\}$ correspond
to the $r$ leading eigenvalues of \eqref{eq:eig_global}, such that
$\pglobale_1 \geq \pglobale_2 \geq \ldots \geq \pglobale_r \geq
\tau_{\rm g} > 0$, span the LIPS.
Here the truncation threshold $\tau_{\rm g}$ is usually set to a value
less than one, e.g., $\tau_{\rm g} = 10^{-1}$, thus only removing parameter directions
where the impact of the likelihood is significantly smaller than that of the prior.

The evaluation of the expected GNH \eqref{eq:expect_hessian} should be
carried out adaptively during posterior exploration.
In particular, we consider approximating $S_{\rm post} $ using
Monte Carlo integration,
\[
\widehat{ S }_{\rm post} = \frac{1}{N} \sum_{k = 1}^{N}
H(\param^{(k)}),
\]
where $\param^{(k)} \sim \pi(\param \vert \data)$, $k = 1 \ldots N$,
are posterior samples adaptively selected during posterior
exploration. Section \ref{sec:algo} will describe how we fit this task
into an adaptive sampling framework where posterior exploration and
posterior approximation are carried out simultaneously.

\begin{lemma}
  \label{lemma:LIPS}
  The eigenvectors $\{\pglobalf_1, \ldots, \pglobalf_r\}$ are linearly
  independent and form a LIPS basis
  $\pbasis_r = [ \pglobalf_1, \ldots, \pglobalf_r ]$ that is
  orthogonal with respect to the inner product $\prinner$.
\end{lemma}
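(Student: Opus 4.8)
The plan is to recognize \eqref{eq:eig_global} as a generalized symmetric eigenproblem for the pencil $(S_{\rm post}, \prcov^{-1})$ and to invoke the classical simultaneous-diagonalization theorem. First I would record two structural facts. For the data-misfit \eqref{eq:misfit}, the Gauss--Newton Hessian has the form $\hessian(\param) = \linear(\param)^\top \obscov^{-1} \linear(\param)$, where $\linear(\param)$ is the Jacobian of the forward map $\forward$ at $\param$; hence $\hessian(\param)$ is symmetric positive semidefinite for every $\param$. Since $S_{\rm post}$ in \eqref{eq:expect_hessian} is a posterior average (or, in practice, a nonnegatively weighted Monte Carlo average) of such matrices, $S_{\rm post}$ is itself symmetric positive semidefinite. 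Second, since $\prcov$ is a nonsingular covariance matrix, $\prcov^{-1}$ is symmetric positive definite and admits a symmetric positive definite square root $\prcov^{-1/2}$, with inverse $\prcov^{1/2}$.

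Next I would reduce \eqref{eq:eig_global} to a standard eigenproblem via the substitution $w = \prcov^{-1/2}\pglobalf$: the generalized problem $S_{\rm post}\,\pglobalf = \pglobale\,\prcov^{-1}\pglobalf$ becomes $\tilde S\, w = \pglobale\, w$ with $\tilde S := \prcov^{1/2} S_{\rm post}\, \prcov^{1/2}$, which is again symmetric positive semidefinite. By the spectral theorem, $\tilde S$ has an orthonormal (Euclidean) eigenbasis $w_1,\dots,w_n$ with real eigenvalues $\pglobale_1 \geq \cdots \geq \pglobale_n \geq 0$; the truncation rule $\pglobale_1 \geq \cdots \geq \pglobale_r \geq \tau_{\rm g} > 0$ simply selects the $r$ leading ones (and forces $r \leq n$). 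Pulling back via $\pglobalf_i = \prcov^{1/2} w_i$ recovers solutions of \eqref{eq:eig_global}, and $\langle \pglobalf_i, \pglobalf_j\rangle_{\prcov} = \pglobalf_i^\top \prcov^{-1}\pglobalf_j = w_i^\top w_j = \delta_{ij}$, i.e.\ $\pbasis_r^\top \prcov^{-1} \pbasis_r = I_r$. This is exactly the $\prinner$-orthonormality required of a LIPS basis in Definition \ref{def:proj} (indeed slightly stronger than the mere $\prinner$-orthogonality asserted). Linear independence is then immediate: if $\sum_i c_i \pglobalf_i = 0$, pairing with $\pglobalf_j$ in $\prinner$ gives $c_j = 0$ for every $j$.

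The one point I expect to need care is degeneracy of the leading eigenvalues. Eigenvectors associated with \emph{distinct} eigenvalues are automatically $\prcov^{-1}$-orthogonal by the standard pencil computation $\pglobale_i \langle \pglobalf_i,\pglobalf_j\rangle_{\prcov} = \pglobalf_j^\top S_{\rm post}\pglobalf_i = \pglobale_j\langle \pglobalf_i,\pglobalf_j\rangle_{\prcov}$, but within a repeated-eigenvalue subspace one must \emph{choose} a $\prinner$-orthonormal basis. Such a choice always exists — Gram--Schmidt in the $\prinner$ inner product produces it, and it is built into the spectral-theorem construction above — so the statement is to be read as asserting the existence of such eigenvectors. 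The remaining items are routine bookkeeping: checking that the integral \eqref{eq:expect_hessian} is well defined, and that the Monte Carlo estimator $\widehat{S}_{\rm post}$ inherits symmetry and positive semidefiniteness (it does, being a nonnegative combination of the $\hessian(\param^{(k)})$), so that the argument carries over verbatim with $\widehat{S}_{\rm post}$ in place of $S_{\rm post}$.
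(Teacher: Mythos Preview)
Your proposal is correct and follows essentially the same approach as the paper: the paper's proof simply records that $\widehat{S}_{\rm post}$ is symmetric positive semidefinite and $\prcov$ is symmetric positive definite, then defers to the classical simultaneous-diagonalization theorem (Theorem~15.3.3 of Parlett) for the generalized eigenproblem. You have merely unpacked that citation---writing out the reduction via $\prcov^{1/2}$ and addressing the degenerate-eigenvalue case explicitly---but the underlying argument is the same.
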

\begin{proof}
  The result directly follows from the fact that the estimated
  expected GNH $\widehat{ S }_{\rm post} $ is symmetric positive
  semidefinite and the prior covariance $\prcov$ is symmetric positive
  definite. See Theorem 15.3.3 of \cite{Lin:Parlett_1980} for details.
\end{proof}

As a consequence of Lemma \ref{lemma:LIPS}, we can construct the
matrix
$\Xi_r^{} = \pbasis_r^{} \, (\pbasis_r^\top \, \pbasis_r^{})^{-1} $,
such that $\Xi_r^\top \pbasis_r^{} = I_r$, and the projector
$\projlis^{} = \pbasis_r^{} \Xi_r^\top$ as in Definition
\ref{def:proj}.
This way, the parameter-approximated posterior
\eqref{eq:approx_post_param} and parameter-reduced posterior
\eqref{eq:redu_post_param} can be defined.

\subsubsection{Parameter subspace identification: choice of reference distribution}
\label{sec:paramref}

The LIPS basis discussed above results from balancing the
influence of the prior and the likelihood over the support of the
posterior.
It is also possible to quantify this relative influence using
expectations of the local Rayleigh quotient over \textit{other} reference
distributions.
Depending on the choice of reference distribution, the local
likelihood-informed directions---summarized by the local Rayleigh
quotient---are weighed differently in the resulting LIPS.
Furthermore, the involvement of the observed data set in the reference
distribution affects how computational resources might be allocated to
LIPS construction.
If the observed data are not involved in the reference distribution,
the evaluation of the expectation can be performed once for a
particular combination of forward model and prior, and reused for
different data sets.\footnote{The integrand $\hessian(\param)$ does not
  depend on the data, and therefore this expectation does not depend
  on the data.}
This way, LIPS construction is decoupled from any particular data set,
and we consider it to be an {\it offline} procedure.
In contrast, if the reference distribution involves the
observation---e.g., if it is the posterior---the resulting expectation
evaluation is an {\it online} procedure, as the LIPS basis must be
recomputed for each new data set.

One obvious candidate for the reference distribution is the prior,
which leads to a LIPS basis that is constructed from the
eigendecomposition of the matrix pencil
$\left( S_{\rm pr} , \prcov^{-1} \right)$, where
\[
S_{\rm pr} = \int_\paramspace \hessian(\param) \, \prior(d\param),
\]
is the expected GNH over the prior.
As discussed above, using the prior as a reference distribution leads
to an offline procedure for parameter dimension reduction.

Alternatively, we can set the reference distribution to be the Laplace
approximation \eqref{eq:lap}, which is an inexpensive and
easy-to-sample surrogate for the posterior.
This way, the LIPS can be obtained from the eigendecomposition of the
pencil $\left( S_{\rm L} , \prcov^{-1} \right)$, where
\[
S_{\rm L} = \int_\paramspace \hessian(\param) \, \lap(d\param).
\]
Although samples can be directly drawn from the Laplace approximation,
this choice of reference constitutes an online approach, since the
Laplace approximation is data-dependent.

All of the LIPS bases discussed above are orthogonal with respect to
the inner product $\prinner$, since they result from generalized
eigenproblems involving $\prcov^{-1}$.  We note that other choices of
reduced parameter basis can also lead to the product-form approximated
posterior \eqref{eq:approx_post_param}, provided that the basis is
orthogonal with respect to $\prinner$.
For instance, the basis defined by the truncated Karhunen-Lo\`{e}ve
expansion of the prior covariance \cite{SuMo:MarNa_2009} also
satisfies this property.
\newcommand{\spost}{S_{\rm post} = \int_\paramspace \hessian(\param)
  \pi(d\param \vert \data)}
\newcommand{\sprior}{S_{\rm pr}  = \int_\paramspace
  \hessian(\param) \prior(d\param) \; \; \,}
\newcommand{\slap}{S_{\rm L}  = \int_\paramspace
  \hessian(\param) \pi_{\rm L}(d\param) \; \,}
\begin{table}[h!]
  \caption{Summary of parameter dimension reduction methods discussed in
    Section~\ref{sec:paramref}: the associated eigendecomposition,
    whether the process is offline/online, and whether adaptive
    sampling is required.}
  \label{table:param_dim_redu}
  \begin{center}
    \begin{tabular}{l|c|c|l|l}
      \hline 
      Method & \multicolumn{2}{l|}{Eigendecomposition} & Online/offline & Adaptive sampling?  \\
      \hline 
      Posterior-LIPS & $\left( S_{\rm post}, \prcov^{-1} \right)$ & $\spost$ & online & yes\\
      Laplace-LIPS & $\left( S_{\rm L}, \prcov^{-1} \right)$ & $\slap$ & 
                                                                                  online & no\\
      Prior-LIPS & $\left( S_{\rm pr}, \prcov^{-1} \right)$ & $\sprior$ & offline & no\\
      Prior-KL   & 
  $\prcov$  & -- & offline & no\\
      \hline
    \end{tabular}
  \end{center}
\end{table}

\begin{figure}[h!]
  \centerline{\includegraphics[width=\textwidth]{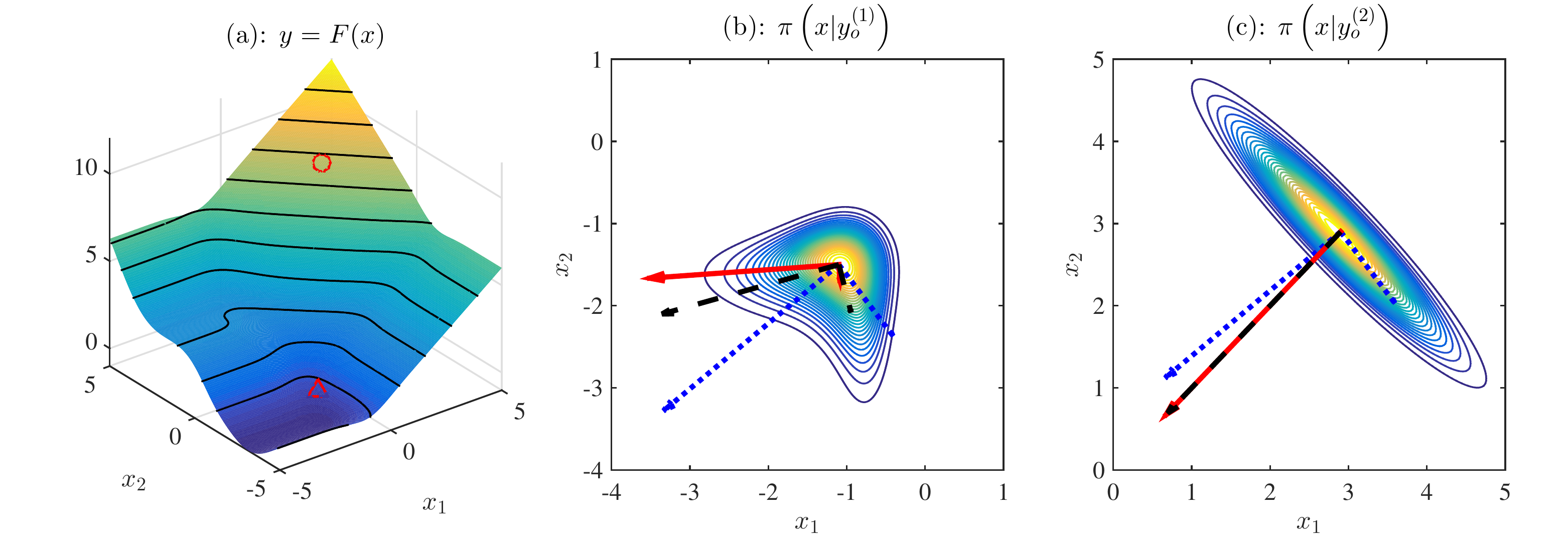}}
  \caption{Illustration of Example~\ref{ex:paramred}. (a) Forward
    model, with two different observations $\data^{(1)}$ and
    $\data^{(2)}$ marked by the triangle and the circle,
    respectively. (b) Contours of the posterior density
    $\pi(\param \vert \data^{(1)} )$. (c) Contours of the posterior
    density $\pi(\param \vert \data^{(2)} )$. On (b) and (c), the
    parameter bases computed using posterior-LIPS, Laplace-LIPS, and
    prior-LIPS are depicted by the solid, dashed, and dotted lines,
    respectively. }
  \label{fig:mot_dips}
\end{figure}

All of the parameter subspace identification techniques presented here can be
interpreted as the result of an eigendecomposition.
For each technique, Table \ref{table:param_dim_redu} summarizes the
specific form of eigendecomposition, whether reduction is
offline/online, and whether adaptive sampling is required.
We compute the expected Hessians required for both Prior-LIPS and
Laplace-LIPS using Monte Carlo integration, where samples can be
directly generated from the prior and the Laplace approximation.
These Hessian evaluations are therefore embarrassingly parallel.
Posterior-LIPS, on the other hand, requires posterior sampling; we
defer a discussion of the associated adaptive framework to
Section~\ref{sec:algo}.
We also note that Prior-KL is less computationally demanding than
the other methods discussed here, as it only involves an
eigendecomposition of the prior covariance.

\begin{example}[Parameter reduction with different reference distributions]
\label{ex:paramred}
To demonstrate the properties of different parameter reduction
approaches, we consider a two-dimensional inverse problem with the
following scalar-valued forward model:
\newcommand{\fbase}[1]{\left(\erf\left( #1+1\right) + 1 \right) \left
    (#1+1 \right)}
\newcommand{\fjoint}[2]{\left( 1-\erf\left( #1 + 1 \right) \right)
  \cos(#2) }
\[
\forward(\param) := \frac12 \left[ \fbase{\param_1} + \fbase{\param_2}
  + \fjoint{\param_1}{\param_2} \right].
\label{eq:mot_lis_f}
\]
%
We define a Gaussian observational error
$\error \sim \normal(0, \sigma^{-2} I_1)$ with $\sigma = 0.25$, and a
prior $\prior(\param) = \normal(0, I_2)$.
Figure \ref{fig:mot_dips}(a) shows the response of the forward model
over a range of input parameter values $(\param_1, \param_2)$.
Different observations will yield posterior distributions with
dramatically different shapes and orientations, due to the
nonlinearity of the forward model.
We consider two different observation values $\data^{(1)}$ and
$\data^{(2)}$, generated by setting the underlying true parameter to
$(-1, -2)$ and $(3, 3)$, respectively.
These observations are illustrated by the triangle and the circle in
Figure \ref{fig:mot_dips}(a).

As shown in Figure \ref{fig:mot_dips}(b), the posterior distribution
$\pi(\param \vert \data^{(1)} )$ has strongly non-Gaussian structure
in both parameter dimensions, as the observation $\data^{(1)}$ falls
in a regime where the forward model exhibits
nonlinear behavior.
In contrast, the observation $\data^{(2)}$ corresponds to a
parameter regime where the forward model is rather linear, and thus
the posterior distribution $\pi(\param \vert \data^{(2)} )$,
shown in Figure \ref{fig:mot_dips}(c), has nearly Gaussian structure.

In this example, the Prior-KL basis simply corresponds to the canonical basis
of the posterior, as the prior is an independent standard Gaussian.
The Prior-LIPS basis is illustrated with dotted lines and remains
unchanged for both data sets, while the Posterior-LIPS basis is
illustrated with solid lines and depends on the data set.
The vectors corresponding to the Prior-LIPS basis and the
Posterior-LIPS basis in the figure are scaled by their associated
generalized eigenvalues.
For the data set $\data^{(2)}$, the likelihood function only informs
\emph{one} dimension in the parameter space, since the forward model is
nearly linear in the support of the posterior.
Posterior-LIPS captures this likelihood-informed subspace accurately,
but Prior-LIPS does not; it suggests that two parameter directions are
of comparable importance.
Laplace-LIPS produces results similar to Posterior-LIPS for
the data set $\data^{(2)}$, but the bases generated by
Laplace-LIPS and Posterior-LIPS are rather different for the data
set $\data^{(1)}$.
This is expected, because the posterior
$\pi(\param \vert \data^{(2)} )$ is nearly Gaussian and well described
by its Laplace approximation, whereas the posterior
$\pi(\param \vert \data^{(1)} )$ is strongly non-Gaussian.
\end{example}

As shown by the above example, all three strategies for computing the
LIPS can reveal the impact of the likelihood on different parameter
directions, relative to the prior.
The Prior-LIPS does not depend on the data set, and hence can be
constructed offline for all possible observations.
Its potential drawback is that the resulting low-dimensional parameter
subspace may contain directions unimportant to the posterior at hand;
alternatively, some important posterior structure may be missed, due
to the averaging of different Hessians over a much less concentrated
parameter measure.
In comparison, the online Posterior-LIPS can accurately capture
likelihood-informed directions for a specific posterior distribution
of interest.
Depending on the shape of the posterior, Laplace-LIPS can be used
either as a replacement for Posterior-LIPS when the posterior is nearly
Gaussian, or as an initial guess for Posterior-LIPS that is adaptively
refined (see Section~\ref{sec:algo}).


\subsection{Likelihood-informed state reduction}
\label{sec:liss}

Although the parameter-approximated posterior
$\papost(\param \vert \data) \propto \post(\param_r \vert \data)
\prior(\param_\perp)$
enables significant reductions in the algorithmic complexity of
posterior exploration, by confining sampling to a lower dimensional
space, a remaining computational bottleneck is the exploration of the
parameter-reduced posterior $\post(\param_r \vert \data)$. The cost of
this exploration is dominated by forward model evaluations, and thus
we turn to model reduction approaches.

\subsubsection{Model reduction}

In the parameter-reduced posterior, by projecting the argument of the
likelihood function onto the subspace spanned by the reduced parameter
basis $\pbasis_r$, the forward model defined by \eqref{eq:amodel} and
\eqref{eq:omodel} can be rewritten as
\begin{equation}
  \amodel(\state, \pbasis_r \param_r) = 0, \quad {\rm and} \quad
  \mout = \omodel(\state, \pbasis_r \param_r) .
  \label{eq:pfmodel}
\end{equation}
To reduce the computational cost, we wish to solve a projection of the
parameter-reduced forward model \eqref{eq:pfmodel} onto a reduced
dimensional state subspace.
Without loss of generality, we consider the system model
$\amodel(\state, \param) = 0$ to consist of a linear operator
$\lmodel$ and a nonlinear function $\nmodel$, which take the form
\begin{equation}
  \label{eq:amodel_LF}
  \lmodel(\param) \state + \nmodel(\param, \state)  = 0.
\end{equation}
Suppose that variation of the $m$-dimensional state lies within an
$s$-dimensional subspace spanned by a basis
$\sbasis_s \in \real^{m \times s}$. Then the state $\state$ can be
approximated by a linear combination of the reduced basis vectors,
i.e., $\state \approx \sbasis_s \state_s$.
This way, the parameter-reduced system model
$A(\state, \pbasis_r \param_r) = 0$ can be approximated by the
following Galerkin projection:
\begin{equation}
  \underbrace{\sbasis_s^\top \lmodel(\pbasis_r^{} \param_r^{}) \sbasis_s^{}
  }_{\lmodel_s(\param_r)} \state_s^{} + \sbasis_s^\top  \nmodel(\sbasis_s^{} \state_s^{}, \pbasis_r^{} \param_r^{})  = 0.
  \label{eq:reduced_lf}
\end{equation}
If $s \ll m$, the dimension of the unknown state in
\eqref{eq:reduced_lf} is greatly reduced compared to that of the
original system \eqref{eq:amodel_LF}.
However, \eqref{eq:reduced_lf} cannot necessarily be solved quickly,
because the reduced-order model still requires evaluating the
full-scale system matrices or residual and then projecting those
matrices or the residual onto the reduced state subspace.
Many elements of these computations depend on the state and parameter
dimension of the original system, and hence this process is typically
computationally expensive (unless there is special structure to be
exploited, such as affine parametric dependence).
In this situation, methods such as missing point estimation
\cite{ROM:AWWB_2008}, empirical interpolation \cite{ROM:BMNP_2004}, or
its discrete variant \cite{ROM:ChaSor_2010}, can be used to
approximate the nonlinear term in the reduced-order model by selective
spatial sampling.

We employ the discrete empirical interpolation method (DEIM)
\cite{ROM:ChaSor_2010} to approximate the nonlinear terms of
\eqref{eq:reduced_lf}.
Suppose that the outputs of a nonlinear function
$\nmodel(\sbasis_s \state_s, \pbasis_r \param_r )$ in
\eqref{eq:reduced_lf} can be captured by a linear subspace spanned by
the basis $\fbasis_t \in \real^{m \times t}$. Then DEIM approximates
the nonlinear term by
\[
\nmodel(\sbasis_s \state_s, \pbasis_r \param_r ) \approx \fbasis_t
\alpha(\state_s, \param_r ).
\]
Here the lower-dimensional coefficient function
$\alpha(\cdot, \cdot) \in \real^t $ is constructed by selectively
evaluating the nonlinear function $\nmodel$ at output indices
$p_1, \ldots, p_t$, which are chosen by a greedy procedure.
This leads to a masking matrix
$\mask_t = [ \delta_{p_1}, \ldots, \delta_{p_{t}} ]$, where
$\delta_{i}$ is the canonical basis in $\real^m$ and the matrix
$\mask_t^\top \fbasis_t^{}$ is nonsingular.
This way, the coefficient function can be determined by
$\mask_t^\top \nmodel(\sbasis_s \state_s, \pbasis_r \param_r ) =
\mask_t^\top \fbasis_t^{} \alpha( \state_s,
\param_r ) $, which yields
\begin{equation}
  \alpha(\state_s^{}, \param_r^{} ) = \left( \mask_t^\top \fbasis_t^{} \right) ^{-1} \mask_t^\top
  \nmodel(\sbasis_s^{} \state_s^{}, \pbasis_r^{} \param_r^{}  ) .
\end{equation}
The resulting DEIM approximation of the reduced-order model
\eqref{eq:reduced_lf} becomes
\begin{equation}
  \lmodel_s^{}(\param_r^{}) \state_s^{} + \sbasis_s^\top  \fbasis_t^{}
  \alpha(\state_s^{}, \param_r^{} ) = 0,
  \label{eq:reduced_deim}
\end{equation}
and the associated model outputs are
\begin{equation}
  \label{eq:reduced_o}
  \mout = \omodel (\sbasis_s \state_s, \pbasis_r \param_r ).
\end{equation}
Together \eqref{eq:reduced_deim} and \eqref{eq:reduced_o} define a
reduced-order model $\mout = \forwardr (\param_r)$ that maps a
realization of the reduced parameter $\param_r$ to an approximation of
the observable model outputs.
In situations where the observables are high-dimensional or the
observation function $\omodel$ involves complex nonlinear relations,
we can again employ the DEIM method to approximate the observation
model.

\subsubsection{State subspace identification: choice of reference
  distribution}
\label{sec:stateref}

At the heart of model reduction is the identification of the reduced
bases $\sbasis_s$ and $\fbasis_t$.
We employ the well-known proper orthogonal decomposition (POD) method
\cite{ROM:Sirovich_1987, ROM:HLB_1996}, also known in statistics as
principal component analysis, for this task.
If the parameter $\param$ is distributed according to a probability
distribution $p$, the eigenvectors corresponding to the leading
eigenvalues of the state covariance matrix,\footnote{To be precise,
  \eqref{eq:cov_pod} is a matrix of second moments of the state, not
  the state covariance; realizations of the state are not
  centered. For simplicity, however, we still refer to this quantity
  and related expressions as state covariances.}
\begin{equation}
  K  = \int_{\paramspace} \state(\param) \state(\param)^\top p(d\param) ,
  \label{eq:cov_pod}
\end{equation}
form the reduced state basis $\sbasis_s$ in POD.
For dynamical problems, time variation of the state should also be
considered, in which case the right-hand side of \eqref{eq:cov_pod}
should be integrated over both time and the parameter.
The covariance matrix is often approximated empirically by full model
states---commonly referred to as snapshots
\cite{ROM:Sirovich_1987}---obtained at parameter samples drawn from
the probability distribution $p$.
Given $M$ samples and the snapshot matrix
$U = [ \state(\param^{(1)}), \ldots, \state(\param^{(M)}) ]$, the
empirical approximation of $K$ takes the form
\[
\widehat{K} = \frac{1}{M} U\,U^\top.
\]
If the sample size is much smaller than the state dimension, i.e.,
$M \ll m$, the singular value decomposition (SVD) of $U$ provides an
effective way of computing the reduced basis $\sbasis_s$.
The basis $\fbasis_s$ can be constructed in a similar fashion.

As in the parameter reduction problem, the choice of an appropriate
reference probability distribution for the parameters $\param$ is also
essential to identifying the reduced state subspace.
Using the construction of the basis $\sbasis_s$ as an example, we now
discuss various choices of reference distribution.
In the inverse problems literature, one common choice is the prior
distribution \cite{ROM:WangZab_2005,ROM:GFWG_2008,ROM:LSK_2013}, which
leads to the state covariance
\begin{equation}
  K_{\rm pr}  = \int_{\paramspace} \state(\param) \state(\param)^\top \prior(d\param) .
  \label{eq:cov_pod_prior}
\end{equation}
Alternatively, Cui et al.\ \cite{ROM:CMW_2014} suggest constructing
the reduced-order model over the support of the posterior rather than
the prior. The size and accuracy of the resulting reduced-order model
can scale better with parameter dimension than those of a
reduced-order model built from the prior, since the posterior has a
more concentrated support than the prior.
In the context of POD, this choice leads to the state covariance
\begin{equation}
  K_{\rm post}  = \int_{\paramspace} \state(\param) \state(\param)^\top \post(d\param | \data) .
  \label{eq:cov_pod_post}
\end{equation}

The choices above can have critical drawbacks for high-dimensional
ill-posed inverse problems, however.
Data may only inform a low-dimensional subspace in the parameter space
(the LIPS), within which the posterior distribution concentrates
relative to the prior.
In contrast, within the complement of the LIPS, the posterior and the
prior are essentially the same.
Hence the variation of the model states induced by either the prior or
the posterior is potentially dominated by the complement prior.
This effect is undesirable; recall that the goal of model reduction
here is to accelerate forward model simulations only for exploring the
parameter-reduced posterior. Thus, state variations induced by the
prior distribution on the complement of the LIPS should be eliminated.
This task can be readily achieved by choosing the parameter-reduced
posterior as the reference distribution, which leads to the state
covariance:
\begin{equation}
  \widetilde{K}_{\rm post}  = \int_{\paramspace_r} \state(\pbasis_r \param_r) \state(\pbasis_r \param_r)^\top \post(d\param_r \vert \data).
  \label{eq:cov_pod_rpost}
\end{equation}
As in the Posterior-LIPS construction, samples from the
parameter-reduced posterior used in a Monte Carlo approximation of
\eqref{eq:cov_pod_rpost} should also be adaptively selected during
posterior exploration.
The details of this adaptive sampling procedure will be given in
Section \ref{sec:algo}.

If the Laplace approximation \eqref{eq:lap} is used as an
approximation of the posterior, the effect of the complement prior can
be removed by projecting the full-dimensional input parameters onto
the LIPS.  This leads to the state covariance
\begin{equation}
  \widetilde{K}_{\rm L}  = \int_{\paramspace} \state(\projlis \param) \state(\projlis \param)^\top \lap(d\param),
  \label{eq:cov_pod_rlap}
\end{equation}
where $\projlis^{} = \pbasis_r^{} \Xi_r^\top$ is given in Definition
\ref{def:proj}.
The data-dependent nature of both the parameter-reduced posterior and
the Laplace approximation necessitate that model reduction be carried
out online when either is used as the reference distribution.

In contrast, using the prior projected onto the LIPS (referred to as
the parameter-reduced prior) as a reference distribution provides an
offline model reduction approach; here the state covariance becomes
\begin{equation}
  \widetilde{K}_{\rm pr}  = \int_{\paramspace_r} \state(\pbasis_r \param_r) \state(\pbasis_r \param_r)^\top \prior(d\param_r ).
  \label{eq:cov_pod_rprior}
\end{equation}
To render this approach fully offline, it should be applied together
with an offline parameter reduction method such as Prior-LIPS or
Prior-KL .

\begin{example}[State reduction with different reference
  distributions]
  \label{ex:liss}
  To demonstrate the effect of the reference distribution on state
  reduction, we consider the following linear example:
  \[
  u = \lmodel x, \quad {\rm and} \quad y = C u,
  \]
  where both the state and the parameters have dimension $n=m=200$,
  and $d=20$ observations are collected; thus we have
  $\lmodel \in \real^{200 \times 200}$ and
  $C \in \real^{20 \times 200}$.
  The system model $\lmodel$ and the prior covariance $\prcov$ have
  eigendecompositions of the form
  $\lmodel = \Psi_{\rm L}^{} \Delta_{\rm L}^{} \Psi_{\rm L}^\top$ and
  $\prcov = \Psi_{\rm pr}^{} \Delta_{\rm pr}^{} \Psi_{\rm pr}^\top $,
  and are constructed randomly.
  The orthonormal bases $\Psi_{\rm L}$ and $\Psi_{\rm pr}$ are
  computed by taking the QR decompositions of two independent square
  matrices with independent standard Gaussian entries. The spectra
  $\Delta_{\rm L} = \diag\{ \kappa_1, \ldots, \kappa_{200}\}$ and
  $\Delta_{\rm pr} = \diag\{ \rho_1, \ldots, \rho_{200}\}$ are
  prescribed as
  \[
  \kappa_i = \kappa_0 \left(\frac{i}{a_{\rm L}}\right)^{-b_{\rm L}},
  \quad {\rm and} \quad \rho_i = \rho_0 \left(\frac{i}{a_{\rm
        pr}}\right)^{-b_{\rm pr}}.
  \]
  We choose $\kappa_0 = 100$, $a_{\rm L} = 2$, $b_{\rm L} = 2$,
  $\rho_0 = 10$, $a_{\rm pr} = 10$ and $b_{\rm pr} = 4$ in this
  experiment.
  Observations are made at $d$ randomly selected indices of the state
  vector, and the observational noise is standard Gaussian.

  The left plot of Figure \ref{fig:mot_rom} shows the spectra of the
  prior covariance matrix $\prcov$ and of the system model $\lmodel$,
  along with the generalized eigenvalues \eqref{eq:eig_global}
  associated with the LIPS.
  The right plot of Figure \ref{fig:mot_rom} compares the spectra of
  the state covariance matrices induced by the prior, the posterior,
  the parameter-reduced prior, and the parameter-reduced posterior.
  We see that the spectra of the covariance matrices induced by the
  prior and the posterior decay more slowly than the spectra induced
  by the parameter-reduced prior and parameter-reduced posterior.
  This difference is due to the high-dimensional complement prior,
  which induces state variations that are irrelevant to the
  observations.
  Projection onto the LIPS eliminates the effect of the complement
  prior, and thus the corresponding state covariance matrices have
  quickly decaying spectra that in fact vanish at index $20$, which is
  the number of observations and the dimension of the LIPS.
  Note also that eigenvalues of the state covariance matrix induced by
  the parameter-reduced posterior decay more quickly than those
  induced by the parameter-reduced prior, since the posterior is more
  concentrated than the prior within the LIPS.
\end{example}

\begin{figure}[h!]
  \centerline{\includegraphics[width=0.65\textwidth]{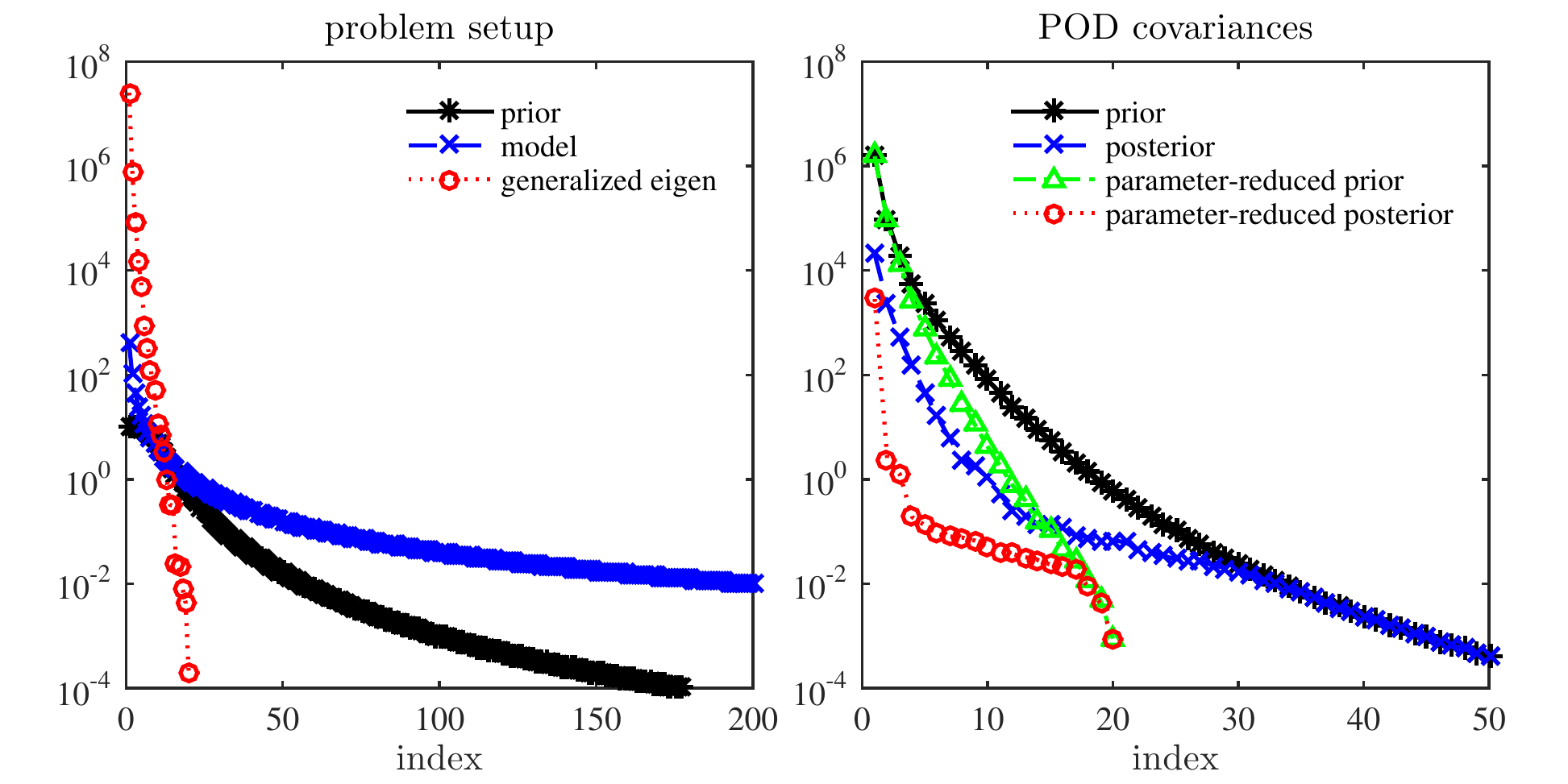}}
  \caption{The impact of parameter reference distributions on model
    reduction, as described in Example~\ref{ex:liss}.  Left: spectra
    of the prior covariance matrix $\prcov$ (stars) and the system
    model $\lmodel$ (crosses), along with generalized eigenvalues from
    \eqref{eq:eig_global} (circles). Right: spectra of the state
    covariance matrices induced by the prior, the posterior, the
    parameter-reduced prior, and the parameter-reduced posterior;
    these are denoted by stars, crosses, triangles and circles,
    respectively.}
  \label{fig:mot_rom}
\end{figure}

\newcommand{\kpost}{ \widetilde{K}_{\rm post} = \int_{\paramspace}
  \state(\pbasis_r \param_r) \state(\pbasis_r \param_r)^\top
  \post(d\param_r \vert \data)}
\newcommand{\kprior}{ \widetilde{K}_{\rm pr} \;\; = \int_{\paramspace}
  \state(\pbasis_r \param_r) \state(\pbasis_r \param_r)^\top
  \prior(d\param_r )}
\newcommand{\klap}{\widetilde{K}_{\rm L} \;\;\; = \int_{\paramspace}
  \state(\projlis \param) \state(\projlis \param)^\top \lap(d\param)}
\begin{table}[h!]
  \caption{Summary of state reduction methods: the associated state covariance matrix,
    whether the process is offline/online, and whether adaptive
    sampling is required.}
  \label{table:state_dim_redu}
  \begin{center}
    \begin{tabular}{l|l|l|l}
      \hline 
      Method & state covariance & online/offline & adaptive sampling? \\
      \hline 
      Posterior-LISS & $\kpost$  & online & yes\\
      Laplace-LISS & $\klap$ &  online & no\\
      Prior-POD & $\kprior$ & offline & no\\
      \hline
    \end{tabular}
  \end{center}
\end{table}

Since the amount of state variation in the reduced parameter subspace
is greatly reduced relative to that induced by either the
full-dimensional posterior or the full-dimensional prior, we will
henceforth construct reduced-order models only within the reduced
parameter subspace.
For each state reduction technique of this kind,
Table~\ref{table:state_dim_redu} summarizes the specific form of the
state covariance, whether reduction is offline/online, and whether
adaptive sampling is required, given the choice of parameter reference
distribution.
Overall, the benefit of pursuing model reduction on the reduced
parameter subspace is twofold: first, this approach addresses
challenges in the scalability of model reduction with parameter
dimension; second, it reduces the computational cost of handling
high-dimensional parameters in the parameterized reduced model.

\subsubsection{Jointly-approximated posterior}

By replacing the forward model $\forward: \real^n \rightarrow \real^d$
with the reduced-order model $\forwardr: \real^r \rightarrow \real^d$,
the data-misfit function \eqref{eq:misfit} can be approximated as
\begin{equation}
  \misfitr(\param_r) = \frac{1}{2}  \Vert
  \forwardr(\param_r) - \data   \Vert^2_{\obscov} .
  \label{eq:misfit_a}
\end{equation}
Then the resulting {\it jointly-reduced posterior} distribution has
the form
\begin{equation}
  \japost(\param_r|\data) \propto \exp \left( -  \misfitr(\param_r) \right ) \prior(\param_r).
  \label{eq:redu_post_j}
\end{equation}
Together with the complement prior, this defines a product-form {\it
  jointly-approximated posterior}
\begin{equation}
  \japost(\param \vert \data) \propto \japost(\param_r, \param_\perp \vert \data) \propto \japost(\param_r \vert \data) \prior(\param_\perp).
  \label{eq:approx_post_j}
\end{equation}
We use the descriptor `joint' to signify that both the parameter space
and the model state space have been reduced in these posterior
approximations. Because the high-dimensional complement prior
$\prior(\param_\perp)$ is analytically tractable and the
low-dimensional jointly-reduced posterior avoids computationally
expensive full forward model evaluations, this jointly-approximated
posterior can be used to design scalable and computationally efficient
posterior exploration schemes.

\begin{remark}[{\bf Data reduction}]
  In problems where the data set is high-dimensional, the computation
  time required to evaluate the data-misfit function \eqref{eq:misfit}
  can also be significant.
  These evaluations can be accelerated by exploiting low-dimensional
  structure in the data space, via the DEIM method.
  Suppose that the dimension of the output of the observation model,
  $\mout = \omodel (\sbasis_s \state_s, \pbasis_r \param_r )$, is much
  larger than the reduced parameter dimension and the reduced state
  dimension.
  Suppose also that the variation of the model outputs,
  $ \omodel (\sbasis_s \state_s, \pbasis_r \param_r )$, can be
  captured by a subspace spanned by a basis
  $\dbasis_o \in \real^{d \times o}$ that is orthogonal with respect
  to $\obsinner$.
  As in the model reduction case, the DEIM method can be used to
  identify a masking matrix
  $\mask_o = [ \delta_{p_1}, \ldots, \delta_{p_{o}} ]$, where
  $\delta_{i}$ is the canonical basis in $\real^o$, such that
  $\mask_o^\top \dbasis_o^{}$ is nonsingular.
  Thus we can determine a low-dimensional coefficient function
  \begin{equation}
    \beta(\state_s^{}, \param_r^{} ) = \left( \mask_o^\top \dbasis_o^{} \right) ^{-1} \mask_o^\top \,
    \omodel(\sbasis_s^{} \state_s^{}, \pbasis_r^{} \param_r^{}  ) ,
  \end{equation}
  which selectively evaluates the nonlinear function $\omodel$ at
  indices $p_1, \ldots, p_o$.
  The resulting approximated observation model has the form
  \begin{equation}
    \tilde{\mout} = \dbasis_o^{} \, \beta(\state_s^{}, \param_r^{} ) .
  \end{equation}
  In this setting, the coefficient function $\beta$ and the reduced
  system model \eqref{eq:reduced_deim} together define a new
  reduced-order forward model
  $\forwarddr : \real^{r} \rightarrow \real^{o}$ that has a smaller
  number of outputs.
  This corresponds to model outputs
  $\tilde{\mout} = \dbasis_o^{} \forwarddr (\param_r) $ in the
  original observable space.
  It leads to an approximated data-misfit function in the form of
  \begin{align}
    \misfitr(\param_r)  
    & = \frac{1}{2}  \left\|  \dbasis_o^{} \forwarddr (\param_r)  - \data \right\|^2_{ \obscov} \nonumber \\
    & =  \frac{1}{2}  \left\|  \forwarddr (\param_r)  - \dbasis_o^\top \obscov^{-1} \, \data \right\|^2 + c,
      \label{eq:reduced_data}
  \end{align}
  where
  $c = \frac{1}{2} \| ( I - \dbasis_o^{} \dbasis_o^\top) \, \data
  \|^2_{\obscov}$ is a constant.
  As in the state reduction case, we can also use a POD approach with
  the parameter-reduced posterior, the Laplace approximation, or the
  parameter-reduced prior as reference distributions for constructing
  the reduced data basis.
\end{remark}


\subsection{Integrated algorithms}
\label{sec:algo}

Concisely, construction of the jointly-approximated posterior
distribution involves parameter reduction followed by model reduction.
Combining the parameter reduction methods of Section \ref{sec:lips}
with the state reduction methods of Section \ref{sec:liss} leads to
various integrated strategies for this task.
We list several algorithmic options in Table \ref{table:joint_redu}, distinguished
according to their online/offline nature and whether adaptive sampling
is required.

\begin{table}[h]
  \caption{Summary of sampling strategies for constructing the jointly-approximated posterior.}
  \label{table:joint_redu}
  \begin{center}
    \begin{tabular}{l|l|l|l|l}
      \hline 
      Strategy  & \klpod &  \prdi & \lapdi & \podi  \\
      \hline 
      Parameter reduction & Prior-KL  & Prior-LIPS &  Laplace-LIPS & Posterior-LIPS \\
      State reduction   & Prior-POD &  Prior-POD & Laplace-LISS & Posterior-LISS\\
      Sampling requirement & offline & offline & online & online and adaptive \\
      \hline
    \end{tabular}
  \end{center}
\end{table}

\subsubsection{Non-iterative strategies}

The \klpod \ strategy is the simplest option above: we compute the reduced parameter
basis using the truncated KL expansion of the prior, and then compute
a POD basis for the state by sampling from the parameter-reduced prior.
In contrast, the \prdi \ strategy uses the prior expectation of the GNH
to construct a low-dimensional parameter subspace and defines the
parameter-reduced prior accordingly; this strategy is still entirely
offline, as both the parameter and state subspaces are independent of
the data.  Only prior samples are required in these two
strategies. Full model evaluations in the Prior-POD step and Hessian
evaluations in Prior-LIPS estimation can be massively parallelized.
The computational cost of \klpod \ is less than that of \prdi, since the
former involves no Hessians in the parameter reduction step.

Given observed data, the \lapdi \ strategy involves first finding the
posterior mode by solving an optimization problem. Then samples can be directly
drawn from the Laplace approximation \eqref{eq:lap}.
In this way, the GNH evaluations required for Laplace-LIPS parameter
reduction and the full model evaluations required for Laplace-LISS
state reduction can also be carried out in parallel.
The computational cost thus is comparable to that of the \prdi \ strategy.
However, in contrast with \prdi, \lapdi \ is an online strategy that
depends on a particular realization of the observed data---and on the
associated approximation of the posterior.
Using the data focuses attention on regions of high posterior
probability and leads to a localization effect in the identification
of subspaces; thus, we expect that the jointly-approximated posterior
computed by \lapdi \ will have better accuracy than those produced by
the \klpod \ or \prdi \ strategies, for comparable dimensions of the
parameter and state bases. We will explore this conjecture in
numerical results below.

\subsubsection{Iterative strategy}

In the \podi \ strategy, constructing parameter and state subspaces
requires computing expectations over the full posterior (to obtain the
posterior-LIPS) and the parameter-reduced posterior (to obtain the
posterior-LISS).
Since direct sampling is not feasible \textit{a priori}---after all,
we are computing these bases in order to \textit{facilitate} posterior
sampling---Algorithm \ref{algo:post_joint} proposes an iterative
sampling framework to construct the reduced bases adaptively during
posterior exploration.

\renewcommand{\algorithmicensure}{\textbf{Note:}}
\begin{algorithm}[h]
  \begin{algorithmic}[1]
    \Require{At iteration $k=0$, initialize the jointly-approximated
      posterior $\japost^{0}(\param | \data)$ to be either the prior
      or the Laplace approximation \eqref{eq:lap}. }
    \Require{At iteration $k$, given (i) the reduced data-misfit
      function $\misfitr^k(\param_r)$ and the jointly-reduced
      posterior $\japost^{k}(\param_r | \data)$ induced by the LIPS
      basis $\pbasis_r^{k}$ and the LISS basis $\sbasis_s^{k}$; (ii)
      the projector $\projlis^{k} = \pbasis_r^{k} (\Xi_r^k)^\top$ as
      in Definition \ref{def:proj}; and (iii) the resulting
      jointly-approximated posterior $\japost^{k}(\param | \data)$,
      one iteration of the algorithm is:}
    \If {$k = 0$}
    \State Generate two sample sets, $\{\param_i\}_{i = 1}^{N}$ and
    $\{\param_i\}_{i = 1}^{M}$, from $\japost^{0}(\param | \data)$ by
    direct sampling.
    \Else
    \State Generate two sample sets, $\{\param_i\}_{i = 1}^{N}$ and
    $\{\param_i\}_{i = 1}^{M}$, from $\japost^{k}(\param | \data)$ by
    applying MCMC to $\japost^{k}(\param_r | \data)$ and direct
    sampling to $\prior(\param_\perp)$.
    \EndIf
    \State Compute the LIPS basis $\pbasis_r^{k+1}$ by finding the
    dominant eigenvectors of
    $(\widehat{S}_{\rm post}^{}, \prcov^{-1}) $, where
    \begin{equation}
      \widehat{S}_{\rm post} = \frac{1}{\sum_{i = 1}^{N}
        \omega_i}\sum_{i = 1}^{N} \omega_i \, \hessian(\param_i) , \ 
      \ \text{and} \ \ \omega_i = \exp\left( \misfitr^k\left( (\Xi_r^k)^\top \param_i \right)  - \misfit(\param_i) \right).
      \label{eq:is_gnh}
    \end{equation}
    \State Update the projector
    $\projlis^{k+1} = \pbasis_r^{k+1} (\Xi_r^{k+1})^\top$ and compute
    the weighted snapshot matrix
    \begin{equation}
      U = \frac{1}{\sqrt{\sum_{i = 1}^{M} \upsilon_i}} \left[ \sqrt{\upsilon_1}
        \state\left(\projlis^{k+1} \param_1\right), \ldots, \sqrt{\upsilon_M}
        \state\left(\projlis^{k+1} \param_M \right) \right], 
      \label{eq:is_state}
    \end{equation}
    where
    $ \upsilon_i = \exp\left(\misfitr^k\left(
        (\Xi_r^{k})^\top \param_i \right) - \misfit\left(
        (\Xi_r^{k+1})^\top \param_i \right) \right)$,
    and then compute the reduced state basis $\sbasis_s^{k+1}$ via the
    SVD of $U$.
    \State For system models involving nonlinear functions, compute
    the DEIM basis $\fbasis_t^{k+1}$ as in Step 7, and then construct
    the masking matrix $\mask_t^{k+1}$.
    \State Update $\misfitr^{k+1}(\param_r)$,
    $\japost^{k+1}(x_r | \data)$ and $\japost^{k+1}(\param | \data)$.
    \Ensure{At $k = 0$, importance sampling is turned off, i.e.,
      $\omega_i = 1$ and $\upsilon_i = 1$.}
  \end{algorithmic}
  \caption{Iterative construction used in the \podi \ strategy.}
  \label{algo:post_joint}
\end{algorithm}

In Algorithm \ref{algo:post_joint}, we initialize the
jointly-approximated posterior to be either the prior or the Laplace
approximation \eqref{eq:lap}.
At each iteration $k$, two independent sets of samples are generated
from the current jointly-approximated posterior,
$\japost^{k}(\param | \data)$, in order to construct reduced parameter
and state bases using the Posterior-LIPS and Posterior-LISS methods,
respectively.
In this step, samples can be directly drawn when $k = 0$. For $k>0$,
sampling $\japost^{k}(\param | \data)$ requires applying MCMC to the
low-dimensional and cheap-to-evaluate jointly-reduced posterior
$\japost^{k}(\param_r | \data)$, and directly sampling the
high-dimensional complement prior $\prior(\param_\perp)$.
At each iteration $k$, we use the current jointly-approximated
posterior $\japost^{k}(\param | \data)$ as the biasing distribution to
compute the posterior expectation of the GNH via importance sampling;
this process can be written as
\begin{equation}
  S_{\rm post} = \int_\paramspace \frac{\post(\param |
    \data)}{\japost^{k}(\param | \data)} \, \hessian(\param) \,
  \japost^{k}(d\param | \data).
  \label{eq:is_gnh_1}
\end{equation}
Given the data-misfit function $\misfitr^k(\param_r)$ induced by the
current reduced order model, the importance weight
\begin{equation}
  \frac{\post(\param |
    \data)}{\japost^{k}(\param | \data)} \propto \omega(\param) := \exp\left( \misfitr^k(\param_r) - \misfit(\param) \right), \;\;  {\rm where} \;\; \param_r = (\Xi_r^k)^\top \param,
  \label{eq:is_gnh_2}
\end{equation}
can only be computed up to a normalizing constant.
This leads to the self-normalized importance sampling estimator of
$\widehat{S}_{\rm post}^{}$ in \eqref{eq:is_gnh}.
By finding the dominant eigenvectors of the pencil
$(\widehat{S}_{\rm post}^{}, \prcov^{-1}) $, we obtain the new LIPS
basis $\pbasis_r^{k+1}$.

\updated{
  \begin{remark}
    We note that it is not feasible to store and factorize the matrix
    $\widehat{S}_{\rm post}$ directly.
    By computing the action of the local GNH $\hessian(\param_i)$ on
    vectors---each action requires one forward model evaluation and
    one adjoint model evaluation---low-rank approximations of each
    sampled $\hessian(\param_i)$ can be computed using Krylov subspace
    methods \cite{Lin:GoVanlo_2012} or randomized algorithms
    \cite{Lin:HMT_2011, Lin:Liberty_etal_2007}.
    Monte Carlo estimates of the expected Hessians used to identify
    the Laplace-LIPS and the Prior-LIPS are constructed in the same
    way.
%
%
    We refer readers to \cite{DimRedu:Cui_etal_2014, MCMC:CLM_2014}
    for more details on storage management and computational strategies.
  \end{remark}}

The updated LIPS basis $\pbasis_r^{k+1}$ leads to a new
parameter-reduced posterior $\post^{k+1}(\param_r | \data)$, and then
the next task is to compute the new reduced-order model using the
Posterior-LISS.
Since finding the Posterior-LISS requires integration over
$\post^{k+1}(\param_r | \data)$, which can be computationally
expensive, we again employ importance sampling with the previous
jointly-approximated posterior $\japost^{k}(\param | \data)$ as the
biasing distribution.
We use the following identity
\newcommand{\subux}{\state\left(\pbasis_r^{k+1} \param_r^{}\right)}
\newcommand{\projux}{\state\left(\projlis^{k+1} \param\right) }
\begin{align*}
  \widetilde{K}_{\rm post}  
  & = \int_{\paramspace_r} \subux \subux ^\top \post^{k+1}(d\param_r^{} \vert \data), \nonumber \\
  & =  \int_{\paramspace} \projux \projux^\top \papost^{k+1}(d\param \vert \data),
\end{align*}
to derive the importance sampling formula in the full parameter space.
Thus, the state covariance estimated over
$\post^{k+1}(\param_r | \data)$ can be written as
\begin{equation}
  \widetilde{K}_{\rm post} = \int_{\paramspace}
  \frac{\papost^{k+1}(\param | \data)}{\japost^k(\param | \data)} \,
  \projux \projux^\top \, \japost^k(d\param | \data).
  \label{eq:is_state_1}
\end{equation}
As in the parameter reduction case, the likelihood ratio
\begin{equation}
  \frac{\papost^{k+1}(\param | \data)}{\japost^k(\param | \data)} \propto \upsilon(\param) := \exp\left( \misfitr^k\left( (\Xi_r^{k})^\top \param \right) - \misfit\left( (\Xi_r^{k+1})^\top \param \right)  \right),
  \label{eq:is_state_2}
\end{equation}
can only be computed up to a normalizing constant, and therefore we
use self-normalized importance sampling.
When the SVD is used to compute the POD basis, this leads to the
weighted snapshot matrix in \eqref{eq:is_state}.
We note that the full model evaluation in the parameter-reduced
data-misfit function $\misfit\left( (\Xi_r^{k+1})^\top \param \right)$
in \eqref{eq:is_state_2} generates exactly the snapshot
$\state(\projlis^{k+1} \param)$ used in computing the POD basis.

At the first iteration, the initial distribution
$\japost^{0}(\param | \data)$ can have a large discrepancy from the
posterior, and thus the resulting importance weights $\omega_i$ and
$\upsilon_i$ can potentially have large variances.
To overcome this potential sampling deficiency, we set the weights
$\omega_i$ and $\upsilon_i$ to $1$ at the first iteration.
This way, the initial distribution is used as a surrogate to explore
the support of the posterior, and importance sampling only kicks in at
later iterations to estimate the LIPS and the LISS.

\updated{
  \begin{remark}
    In each iteration of Algorithm \ref{algo:post_joint}, constructing
    the LIPS involves evaluating the forward model---and hence the
    full posterior density---at a set of samples, and computing the
    action of the GNH on a number of directions for each sample in this
    set.
    In addition, the forward model and the full posterior density are
    evaluated at another set of samples---projected onto the subspace
    spanned by the LIPS---to construct the reduced-order model.
    Based on our numerical experience, a sample size on the order of
    hundreds is sufficient for both steps.
    We also note that the first iteration of Algorithm
    \ref{algo:post_joint} generates a jointly-approximated posterior
    equivalent to the result of either the \emph{\prdi} strategy or
    the \emph{\lapdi} strategy, depending on the choice of initial
    distribution.
  \end{remark}}

To ensure the convergence of the self-normalized importance sampling
estimators \eqref{eq:is_gnh} and \eqref{eq:is_state}, it is required
that $\japost^{k}(\param | \data) > 0$ whenever
$\post(\param | \data) > 0$ , and that
$\japost^{k}(\param | \data) > 0$ whenever
$\papost^{k+1}(\param | \data) > 0$, for any $k>0$.
Furthermore, distributions constructed from the low-dimensional
subspaces are not guaranteed to capture the tails of the posterior
accurately, and hence the weights $\omega(\param)$ and
$\upsilon(\param)$ might have large variance in some situations.
Bounding $\omega(\param)$ and $\upsilon(\param)$ from above, however,
can control the variance of the importance weights and thus guarantee
finite variance of the estimators.
The following lemma establishes that by assigning upper bounds to the
approximated data-misfit functions, the ratios $\omega(\param)$ and
$\upsilon(\param)$ can be bounded.

\begin{lemma}
  \label{lemma:bounds}
  Given an upper bound $K > 0$ on the parameter-approximated
  data-misfit function $\misfit( \Xi_r^\top \param )$ and the
  jointly-approximated data-misfit function
  $\misfitr( \Xi_r^\top \param )$, i.e.,
  \begin{equation}
    \misfit( \Xi_r^\top \param ) \leq K < \infty \;\; {\rm and} \;\;
    \misfitr( \Xi_r^\top \param ) \leq K < \infty, 
    \label{eq:bounds}
  \end{equation}
  the ratios $\omega(\param)$ and $\upsilon(\param)$ defined in
  \eqref{eq:is_gnh_1} and \eqref{eq:is_state_2} are bounded as
  $\omega(\param) \leq \exp(K)$ and $\upsilon(\param) \leq \exp(K)$.
\end{lemma}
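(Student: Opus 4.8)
The plan is to reduce the claim to the single structural observation that a data-misfit function, being half of a squared $\obscov$-weighted norm, is nonnegative; each of the two ratios is then bounded above by the exponential of a \emph{reduced} data-misfit alone, which is capped by $K$ by hypothesis.

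First I would record the two ratios in the explicit form they take in Algorithm~\ref{algo:post_joint}. From \eqref{eq:is_gnh_2}, $\omega(\param) = \exp\big(\misfitr^k((\Xi_r^k)^\top \param) - \misfit(\param)\big)$, and from \eqref{eq:is_state_2}, $\upsilon(\param) = \exp\big(\misfitr^k((\Xi_r^k)^\top \param) - \misfit((\Xi_r^{k+1})^\top \param)\big)$, where---following the abuse of notation used throughout---$\misfit$ evaluated at a reduced vector denotes $\misfit$ of its lift through the corresponding LIPS basis, i.e.\ $\misfit(\projlis \param)$. In the streamlined notation of the lemma these are exactly $\exp\big(\misfitr(\Xi_r^\top \param) - \misfit(\param)\big)$ and $\exp\big(\misfitr(\Xi_r^\top \param) - \misfit(\Xi_r^\top \param)\big)$ at a generic iteration.

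Next I would note that every data-misfit appearing as a \emph{subtracted} term is nonnegative: by \eqref{eq:misfit} and \eqref{eq:misfit_a} we have $\misfit(\cdot) = \tfrac12 \|\forward(\cdot) - \data\|_{\obscov}^2 \ge 0$, and likewise $\misfit(\projlis \param) \ge 0$, because the $\obscov$-weighted norm is induced by the positive definite matrix $\obscov^{-1}$. Hence in both ratios the exponent is at most the \emph{added} reduced data-misfit term $\misfitr(\Xi_r^\top \param)$; invoking the second bound in \eqref{eq:bounds}, namely $\misfitr(\Xi_r^\top \param) \le K$, then yields $\omega(\param) \le \exp(K)$ and $\upsilon(\param) \le \exp(K)$, which is the claim.

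I do not anticipate any genuine difficulty; the estimate is one line of monotonicity once the definitions are unpacked. The only points requiring a little care are aligning the iteration-indexed definitions \eqref{eq:is_gnh_2} and \eqref{eq:is_state_2} with the iteration-free statement, and observing that only the upper bound on $\misfitr(\Xi_r^\top \param)$ is actually needed for these one-sided estimates---the companion bound on $\misfit(\Xi_r^\top \param)$ is retained because in practice one caps both misfits (the ``special treatment of the tails'' alluded to in the introduction), and it is what one requires when controlling the reciprocal ratios that enter the self-normalized estimators \eqref{eq:is_gnh} and \eqref{eq:is_state}.
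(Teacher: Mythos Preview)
Your proposal is correct and is essentially the paper's proof: observe that every data-misfit is a nonnegative weighted squared norm, so the subtracted term in each exponent can be dropped, and then invoke the assumed cap on the remaining reduced misfit. Your additional remark---that only the bound on $\misfitr(\Xi_r^\top \param)$ is actually used for the stated one-sided estimates---is accurate and sharpens the presentation slightly; the paper applies both hypotheses without isolating which one does the work.
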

\begin{proof}
  Since all the data-misfit functions have the form of a weighted
  $L^2$ norm, their values cannot be negative, i.e.,
  $\misfit\left( \param \right) \geq 0$,
  $\misfit( \Xi_r^\top \param ) \geq 0$, and
  $\misfitr( \Xi_r^\top \param ) \geq 0$.
  The upper bounds on $\misfit( \Xi_r^\top \param )$ and
  $\misfitr( \Xi_r^\top \param )$ in \eqref{eq:bounds} then lead to
  \[
  \misfitr(\param_r) - \misfit(\param) \leq K \;\; {\rm and} \;\;
  \misfitr( \Xi_r^\top \param ) - \misfit( \Xi_r^\top \param ) \leq K.
  \]
  Thus both ratios $\omega(\param)$ and $\upsilon(\param)$ are bounded
  above by $\exp(K)$.
\end{proof}

We employ a heuristic based on a (somewhat frequentist) probabilistic
argument to choose a value of $K$ to impose as a bound on our misfit
functions. If the whitened residual in the data-misfit function,
$\obscov^{-1/2}(\forward(\param) - \data)$, is a $d$-dimensional
random vector whose components are independent standard Gaussians,
then the data-misfit function follows a chi-squared distribution with
$d$ degrees of freedom, $\chi_d^2$.
Then the upper bound $K$ can be chosen so that the probability of the
data-misfit function exceeding $K$ is $\tau_d \ll 1$, i.e.,
\[ {\mathbb{P}}[z > K] = \tau_d, \ \ \text{where} \ \ z \sim \chi_d^2.
\]
Here we choose $\tau_d = 10^{-4}$.

\subsubsection{Posterior exploration schemes}

The jointly-approximated posterior provides a launching point for many
scalable and computationally efficient posterior sampling schemes: the
analytically tractable and high-dimensional complement prior is
sampled directly, while the low-dimensional and analytically
intractable jointly-reduced posterior can be sampled by various MCMC
methods. Evaluations of the latter density are accelerated because
they rely only on the reduced-order forward model.
In this paper, we employ the adaptive Metropolis-adjusted Langevin
Algorithm (MALA) \cite{MCMC:Atchade_2006} to sample the jointly-reduced posterior.
We note that the separable representation of the jointly-approximated
posterior is amenable to a range of alternative posterior exploration
or integration approaches, e.g., implicit sampling
\cite{Sto:ChoTu_2009,Sto:MTAC_2012}, the randomize-then-optimize
method \cite{Sto:BSHL_2014}, and sparse quadrature
\cite{IP:SchStu_2012, IP:SchiSch_2013, IP:ChenSch_2015}.
If, on the other hand, one would like to compute the expectation of a
function of interest $g(\param)$ over the full posterior, samples from
the jointly-approximated posterior can be used to derive importance
sampling estimates thereof.
Bounding the jointly-approximated posterior as in Lemma
\ref{lemma:bounds}, and using the identity
\[
\expect \left[ g(\param) \right] = \int_\paramspace g(\param) \,
\post(d\param | \data) = \int_\paramspace \frac{\post(\param |
  \data)}{\japost(\param | \data)} \, g(\param) \, \japost(d\param |
\data),
\]
yields the estimator
\begin{equation}
  \expect \left[ g(\param) \right]  \approx \frac{1}{\sum_{i = 1}^{N}\omega(\param_i)}  \sum_{i = 1}^{N}  \omega(\param_i) g(\param_i),
  \label{eq:is_est}
\end{equation}
where
$ \omega(\param) = \exp\left( \misfitr\left( \Xi_r^\top \param \right)
  - \misfit\left( \param \right)\right) $
and $\param_i \sim \japost(\param | \data)$ for $i = 1,\ldots,N$.
This ratio $ \omega(\param)$ could also be used in a
delayed-acceptance MCMC method \cite{MCMC:ChriFox_2005, MCMC:Cui_2010}
to sample the full posterior; in this case, the jointly-approximated
posterior is used to ``screen'' MCMC proposals and thus more quickly
traverse the support of the full posterior.
Note that the importance sampling estimator above requires a full
posterior evaluation to compute each importance weight. This effort
can be computationally demanding, but these full posterior evaluations
can be massively parallelized, as the sampling step is based on the
jointly-approximated posterior.
Further variance reduction might be achieved using the control variates
technique \cite{Owen_2013}.
%



%
\section{Example 1: atmospheric remote sensing}
\label{sec:gomos}

In this section, we apply our joint approximation approach to a
realistic atmospheric remote sensing problem, where satellite
observations from the {\it Global Ozone MOnitoring System} (GOMOS) are
used to estimate the concentration profiles of various gases in the
atmosphere.
We will first present the GOMOS model, the inverse problem setup, and
the reduced-order model.
Then we will demonstrate various aspects of the joint posterior
approximation using the GOMOS inversion.

\subsection{Problem setup}

The GOMOS instrument repeatedly measures light intensities $\rho_\nu$
at different wavelengths $\nu$. First, a reference intensity spectrum
$\rho_{\rm ref}$ is measured above the atmosphere. The transmission
spectrum is defined as $T_\nu= \rho_\nu / \rho_{\rm ref}$. The
transmissions measured at wavelength $\nu$ along the ray path $z$ are
modelled using Beer's law:
\begin{equation}
  T_{\nu,z}=\exp \left( - \int_z \sum_{\mathrm{gas}} a_\nu^{\mathrm{gas}}(z(\zeta)) \kappa^{\mathrm{gas}}(z(\zeta))d\zeta \right),
  \label{mod}
\end{equation}
where $\kappa^{\mathrm{gas}}(z(\zeta))$ is the density of a gas (unknown
parameter) at tangential height $z$. The so-called cross-sections
$a_\nu^{\mathrm{gas}}$, known from laboratory measurements, define how
much a gas absorbs light at a given wavelength.

To approximate the integrals in (\ref{mod}), the atmosphere is
discretized. The geometry used for inversion resembles an onion: the
gas densities are assumed to be constant within spherical layers
around the Earth. The GOMOS measurement principle is illustrated in
Figure \ref{fig:gomos_setting}.

\begin{figure}[h!]
  \centerline{\includegraphics[width=0.8\textwidth]{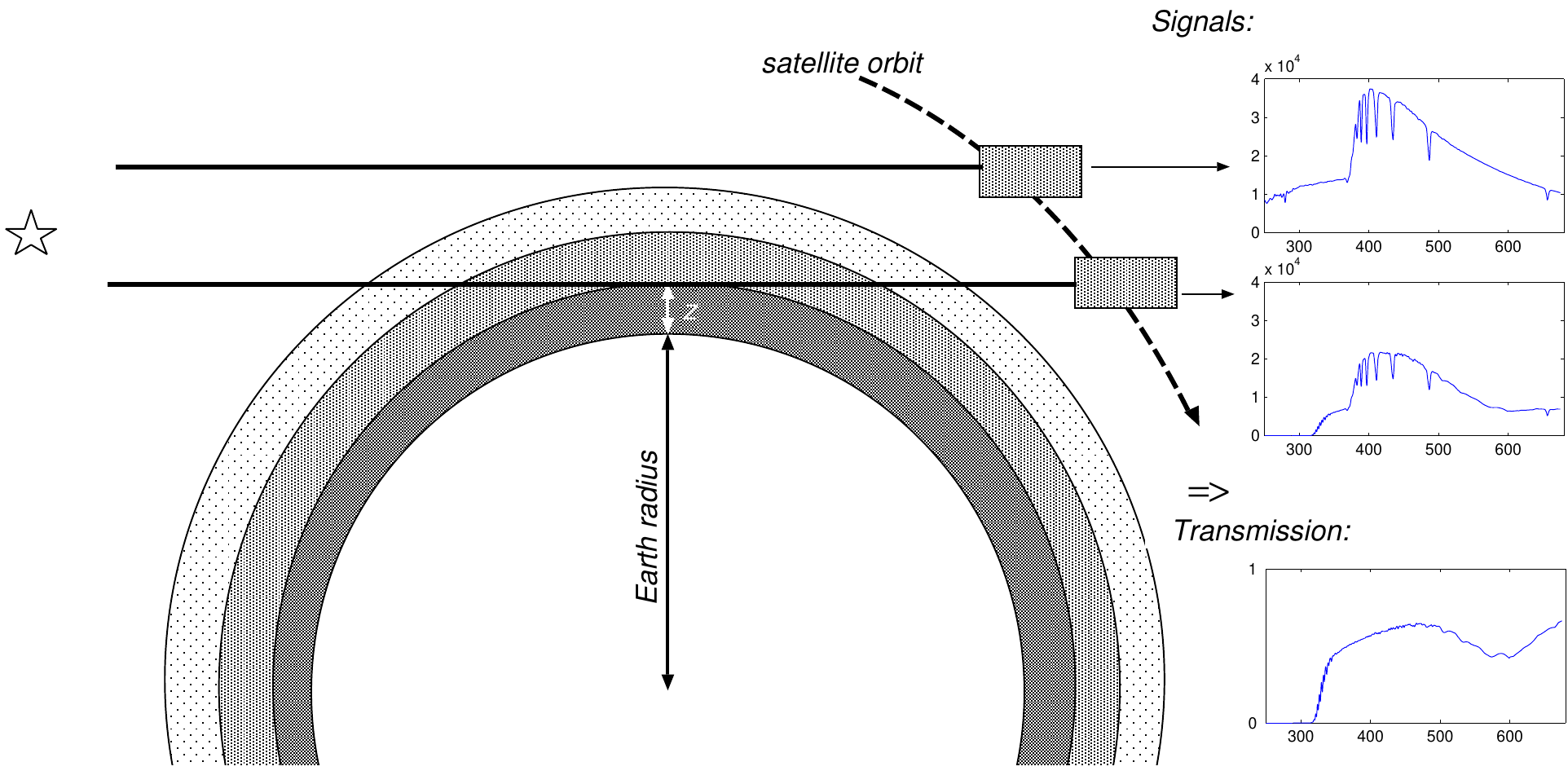}}
  \caption{The principle of the GOMOS measurement. The reference
    intensity is measured above the atmosphere. The observed
    transmission spectrum is the attenuated spectrum (measured through
    the atmosphere) divided by the reference spectrum. The atmosphere
    is presented locally as spherical layers around the Earth. Note
    that the thickness of the layers is much larger relative to the
    Earth in this figure than in reality. The figure is adopted from
    \cite{IP:Haario_etal_2004}, with the permission of the authors.}
  \label{fig:gomos_setting}
\end{figure}

We assume that the cross-sections do not depend on height. In the
inverse problem we have $N_{\rm gas}$ gases, $N_{\nu}$ wavelengths,
and the atmosphere is divided into $N_{\rm alts}$ layers. The
discretization is fixed so that number of measurement lines is equal
to the number of layers. Approximating the integrals by sums over the
chosen grid, and combining information from all lines and all
wavelengths, we can write the model in matrix form as follows:
\[
T=\exp(-A B^\top G^\top),
\]
where $T \in \real^{N_{\nu} \times N_{\rm alts}}$ are the modelled
transmissions, $A \in \real^{N_{\nu} \times N_{\rm gas}}$ contains the
cross-sections, $B \in \real^{N_{\rm alts} \times N_{\rm gas}}$
contains the unknown densities, and
$G \in \real^{N_{\rm alts } \times N_{\rm alts}}$ is the geometry
matrix that contains the lengths of the lines of sight in each
layer. %

Each gas density profile is endowed with an independent log-normal
process prior.
Equivalently, the gas density profiles can be represented as
$B = \exp(X)$.
Using MATLAB notation, each discretized log-profile $X(:,i)$ follows a
Gaussian prior $ N(\prmean^{(i)},\prcov^{(i)})$, where $\prcov^{(i)}$
is defined by the squared exponential covariance kernel
\begin{equation}
C_i(\zeta, \zeta^\prime)=\sigma_i\exp \left(-\frac{\|\zeta - \zeta^\prime\|^2}{2\zeta_{0}^2} \right),
\end{equation}
where the correlation length is $\zeta_{0}=10$. In the example below,
we will infer $N_{\rm gas}= 4$ unknown gas profiles; thus we choose
$\sigma_1=5.22$, $\sigma_2=9.79$, $\sigma_3=23.66$, and
$\sigma_4=83.18$.
These priors are chosen to promote smooth gas density profiles with large variations.

Now let $\otimes$ denote the Kronecker product and $\mathrm{vec}(\cdot)$
be a vectorization operator that stacks the columns of its matrix argument on
top of each other.
Using the identity $\mathrm{vec}(AB^\top G^\top)=(G \otimes A) \mathrm{vec}(B^\top)$, we
obtain the vectorized parameter-to-data relationship,
\begin{equation}
  \data = \mathrm{vec}(T)+\error = \exp \left( -(G \otimes A) \exp( \param ) \right)+\error,
  \label{eq:modvec}
\end{equation}
where $\param = \mathrm{vec}(X^\top) $ are the vectorized parameters
and $\error$ is the measurement error modeled by independent Gaussian
random variables with known variances.
Here we adopt the same model setup and synthetic data set used in
\cite{DimRedu:Cui_etal_2014}.
The atmosphere is discretized into $N_{\rm alts} = 50$ layers, and
with four profiles to infer, the total dimension of the parameter
is $n = 200$.
We have observations at $N_\nu = 1416$ wavelengths, and thus the
dimension of the data is $d = 70800$.
\updated{Although the data dimension is much higher than the parameter
  dimension in this case, the resulting inverse problem is still
  ill-posed. The forward model introduces a strong smoothing effect, and
  thus the high-dimensional data can inform only a small number of
  parameter dimensions. Similar situations are encountered in X-ray
  tomography \cite{DimRedu:Spantini_etal_2015}, where the forward
  model also involves a system of integral equations.}
We refer the readers to \cite{DimRedu:Cui_etal_2014} for a further
description of the model setup and the data set.
For more details about the GOMOS instrument and the Bayesian treatment
of the inverse problem, see \cite{IP:Haario_etal_2004,MCMC:Tamminen_2004} and the
references therein.

The forward model
$\mout = \exp \left( -(G \otimes A) \exp( \param ) \right)$ maps from
$\real^{200}$ to $\real^{70800}$ and involves two exponential
functions.
Given a reduced parameter basis $\pbasis_r$ and a realization of the
reduced parameter $\param_r$, the computational expense of evaluating
the forward model with the reduced parameter,
$\exp \left( -(G \otimes A) \exp( \pbasis_r \param_r ) \right)$, arises
from several sources: the exponential expression
$\exp( \pbasis_r \param_r ) $, which involves a matrix-vector product
and a $200$-dimensional exponential function evaluation; the
matrix-vector product with $G \otimes A$; and the evaluation of the
$70800$ dimensional exponential function for producing the model
outputs.
To set up the reduced-order model, the first task is to construct a DEIM
interpolation for the exponential function
$\exp( \pbasis_r \param_r ) $.
Given a basis $\fbasis_t$ that spans the subspace capturing the
variations of the outputs of $\exp( \pbasis_r \param_r )$ and the
associated masking matrix $\mask_t$, this DEIM interpolation takes the form
\[
\exp( \pbasis_r \param_r ) \approx \fbasis_t^{} \alpha(\param_r^{}),
\quad {\rm where} \quad \alpha(\param_r^{}) = (\mask_t^\top
\fbasis_t^{} )^{-1} \exp( \mask_t^\top \pbasis_r^{} \param_r^{} ) .
\]
Then, given a reduced data basis $\dbasis_o$ and the associated
masking matrix $\mask_o$, another DEIM interpolation is employed to
reduce the \textit{output} dimension of the forward model in the form
of
\begin{equation}
  \forwardr (\param_r^{}) = (\mask_o^\top \dbasis_o^{})^{-1} \exp \left( - \mask_o^\top (G \otimes A)  \fbasis_t^{}  \alpha( \param_r^{} ) \right).
  \label{eq:modvec_r}
\end{equation}
In the reduced-order model above, the computational cost is dominated
by the evaluation of the nonlinear function
$\alpha: \real^{r} \rightarrow \real^{t}$, the matrix-vector product
with the $o \times t$ dimensional matrix
$ \mask_o^\top (G \otimes A) \fbasis_t^{} $, and the $o$-dimensional
exponential function.
The reduced-order model \eqref{eq:modvec_r} is used together with the
approximated data-misfit function \eqref{eq:reduced_data} to
accelerate evaluations of the original data-misfit function, which
involved high-dimensional model outputs.

\subsection{Numerical results}

We first benchmark the parameter reduction methods introduced in
Section \ref{sec:lips}.
The (squared) Hellinger distance\footnote{The Hellinger distance
  translates directly into bounds on expectations
  \cite{IP:Stuart_2010}, and hence we use it as a metric to quantify
  the error of approximated posterior distributions.} between the full
posterior $\post(\param \vert \data) $ and the parameter-approximated
posterior $\papost(\param \vert \data)$,
\begin{equation}
  D_{\rm H}^2 \left( \post(\param), \papost(\param) \right) = \frac12 \int_\paramspace \left( \sqrt{\post(\param \vert \data)} - \sqrt{\papost(\param \vert \data)} \right)^2 d\param,
  \label{eq:hell_p}
\end{equation}
is used to evaluate the errors induced by various parameter reduction
methods, and to examine convergence versus the number of basis vectors
used for the parameter subspace (LIPS or Prior-KL).
Results are shown in Figure \ref{fig:gomos_param_dh}.
In this example, all three likelihood-informed methods converge more
quickly than Prior-KL (triangles).
As expected, Prior-LIPS (squares) is outperformed by the other two
likelihood-informed methods, and Posterior-LIPS (crosses) is more
accurate than the other methods for any given parameter subspace
dimension.
We note that the Laplace approximation itself (dashed line) has a
rather large Hellinger distance from the posterior. This reflects the
non-Gaussianity of the problem, and should not be confused with the
fact that the convergence curve of the non-Gaussian Laplace-LIPS
approximation (diamonds) is sandwiched between those of Prior-LIPS and
Posterior-LIPS.

\begin{figure}[h!]
  \centerline{\includegraphics[width=0.7\textwidth]{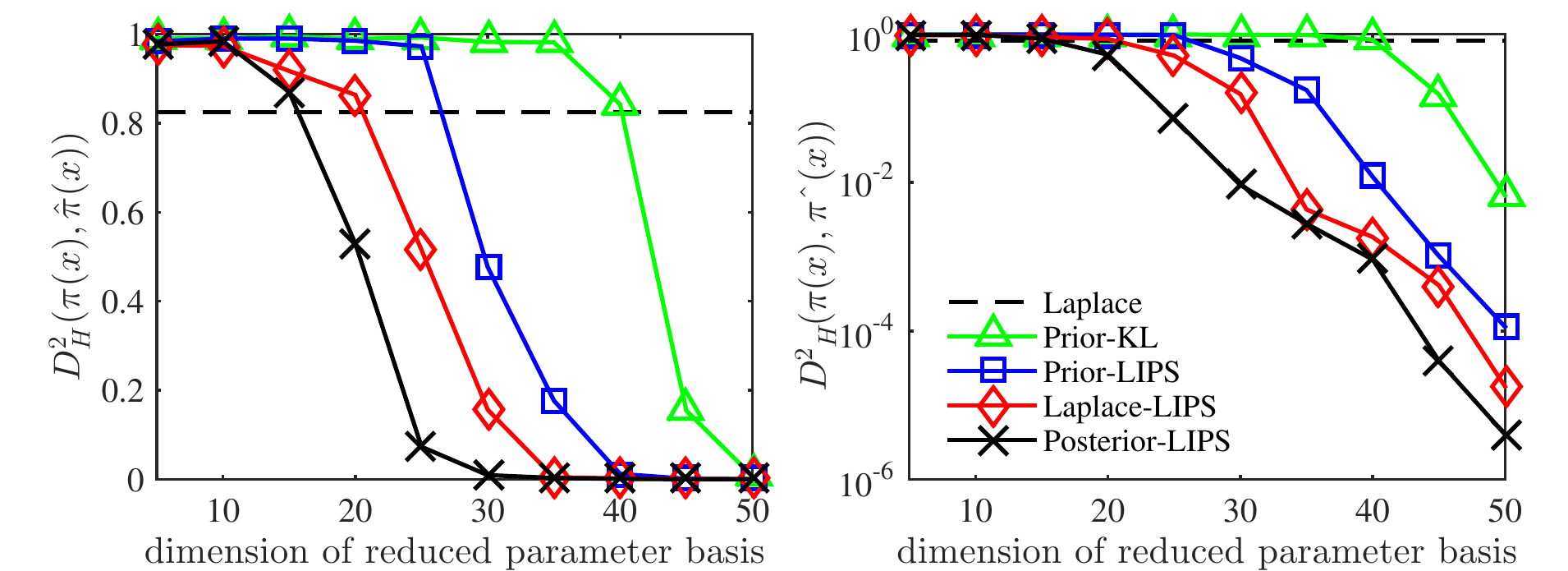}}
  \caption{Convergence of various \textit{parameter reduction}
    methods, for the GOMOS example: squared Hellinger distances
    \eqref{eq:hell_p} versus the dimensions of the reduced parameter
    bases $\pbasis_r$ defined by Posterior-LIPS, Laplace-LIPS,
    Prior-LIPS, and Prior-KL. The dashed line represents the squared
    Hellinger distance between the full posterior and its Laplace
    approximation \eqref{eq:lap}. Both subfigures show the same data,
    but for clarity we plot with both linear (right) and semilog
    (left) scales.}
  \label{fig:gomos_param_dh}
\end{figure}

Given a $28$-dimensional parameter basis built by Posterior-LIPS, we
next benchmark the state reduction methods introduced in Section
\ref{sec:liss}.
The (squared) Hellinger distance between the parameter-reduced
posterior $\post(\param_r \vert \data)$ and the jointly-reduced
posterior $\japost(\param_r \vert \data)$,
\begin{equation}
  D_{\rm H}^2 \left( \post(\param_r), \japost(\param_r) \right) = \frac12 \int_{\paramspace_r} \left( \sqrt{\post(\param_r \vert \data)} - \sqrt{\japost(\param_r \vert \data)} \right)^2 d\param_r,
  \label{eq:hell_m}
\end{equation}
is used to compare the convergence of various state reduction methods,
as a function of the number data basis vectors used in the
approximated data-misfit function \eqref{eq:reduced_data}.
Four DEIM approximations of the exponential function
$\exp( \pbasis_r \param_r )$, with reduced bases $\Theta_t$ of
dimension $20$, $40$, $60$ and $80$, are used in this benchmark.
Results are shown in Figure \ref{fig:gomos_rom_dh}.
In this test, Prior-POD fails to produce a jointly-reduced posterior
of reasonable accuracy (we observe
$D_{\rm H}^2 \left( \post(\param_x), \japost(\param_r) \right)$ always
above 0.9), and thus its performance is not reported.
For both Posterior-LISS and Laplace-LISS, the convergence of the
resulting jointly-reduced posteriors depends on both the DEIM
interpolation of the exponential function $\exp( \pbasis_r \param_r )$
and on the dimension of the reduced data basis. If the first DEIM
interpolation is too coarse (e.g., $\mydim(\fbasis_t)= 20$ or $40$),
the error that can be achieved by refining the reduced data basis
reaches a plateau.
But for any DEIM interpolation, the jointly-reduced posterior induced
by Posterior-LISS is about two orders of magnitude more accurate than
that produced by Laplace-LISS.

\begin{figure}[h!]
  \centerline{\includegraphics[width=0.7\textwidth]{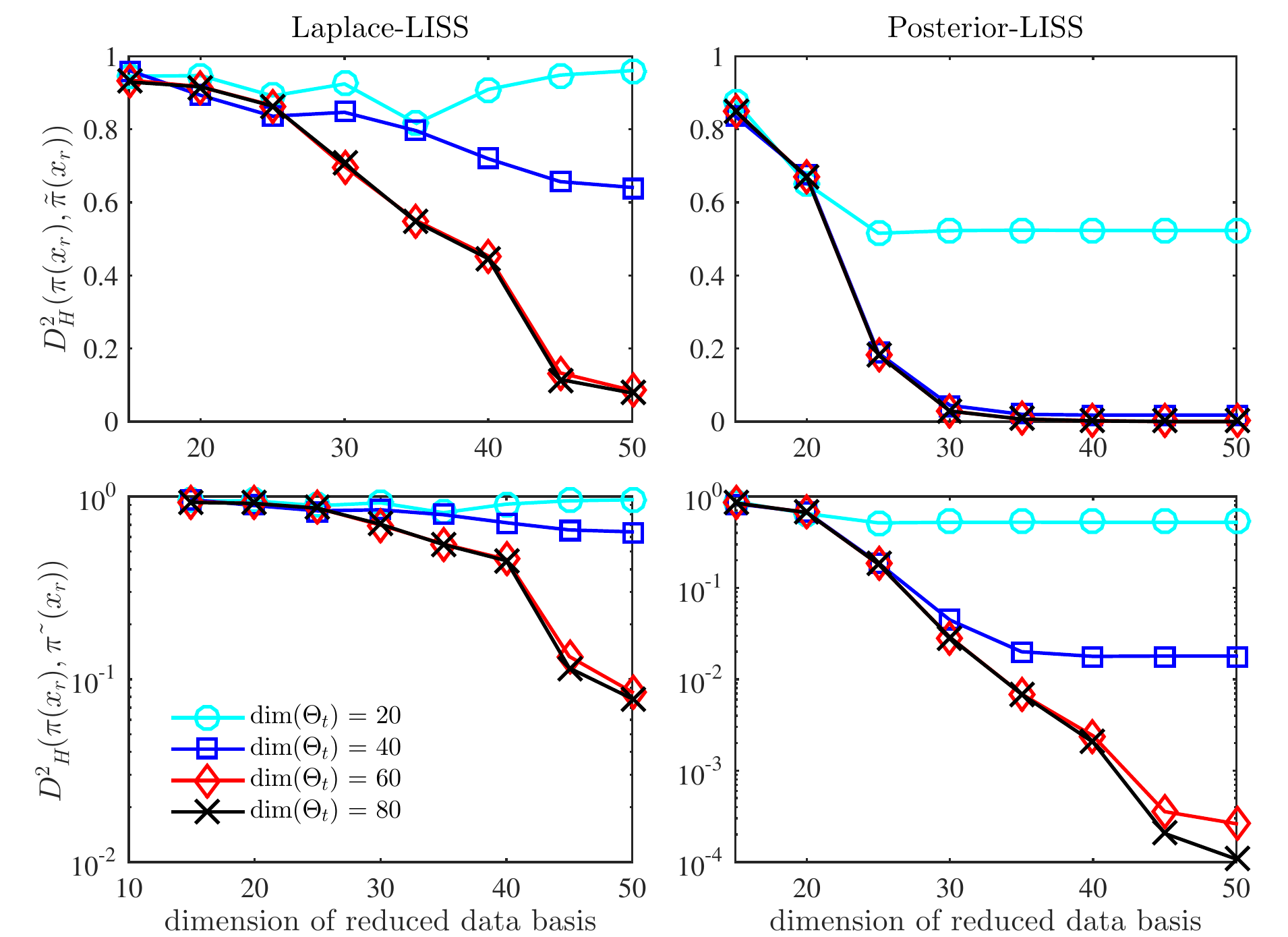}}
  \caption{Convergence of various \textit{data reduction} methods, for
    the GOMOS example. Squared Hellinger distances \eqref{eq:hell_m}
    versus the dimension of reduced data basis $\dbasis_o$ defined by
    Laplace-LISS and Posterior-LISS are shown in the left and right
    columns, respectively.  Four different DEIM approximations of the
    exponential function $\exp( \pbasis_r \param_r )$, with dimensions
    $20$, $40$, $60$, and $80$, are shown for each case. Again, errors
    are plotted on linear (top) and logarithmic (bottom) scales.}
  \label{fig:gomos_rom_dh}
\end{figure}

In the previous comparisons, the posterior-oriented methods
(Posterior-LIPS and Posterior-LISS) show clear advantages over the
other methods.
Thus, we would expect the various strategies for constructing the
jointly-approximated posterior in Section \ref{sec:algo} to have
similar performance characteristics.
We now demonstrate eight iterations of the \podi \ strategy (Algorithm
\ref{algo:post_joint}) using either the prior or the Laplace
approximation \eqref{eq:lap} as initial distributions.
The reduced parameter bases are truncated at the eigenvalue threshold
threshold $\tau_{\rm g} = 0.1$, the DEIM basis $\fbasis_t$ for
interpolating the function $\exp( \pbasis_r \param_r )$ is truncated
to retain eigenvalues above $10^{-12}$, and the reduced data basis is
truncated to retain eigenvalues above $10^{-5}$.
\updated{To build the LIPS in each iteration of \podi, the forward
  model and the action of the GNH in multiple directions are evaluated
  at 200 parameter samples. In particular, for each parameter sample,
  we use one forward model simulation and the action of GNH on 30
  directions. To build the reduced-order model at each iteration, we
  evaluate the forward model at 500 parameter samples.
  In comparison, the number of forward model and GNH--action
  evaluations required by \lapdi \ or \prdi \ is exactly the same as
  that required by one iteration of \podi.

  To find the MAP, we employ the subspace trust region method of
  \cite{Opt:CoLi_1994, Opt:CoLi_1996} with inexact Newton
  iterations. Using the prior mean as the initial guess, finding the MAP
  requires 25 Newton iterations. Each Newton iteration involves one
  forward model evaluation and the action of the GNH on an average of 8
  directions.
  The computational cost of finding the MAP is much smaller than that
  of constructing the jointly-approximated posteriors.
  But we emphasize that sampling the jointly-approximated posteriors
  does not involve any further full forward simulations.  }

\begin{figure}[h!]
  \centerline{\includegraphics[width=0.7\textwidth]{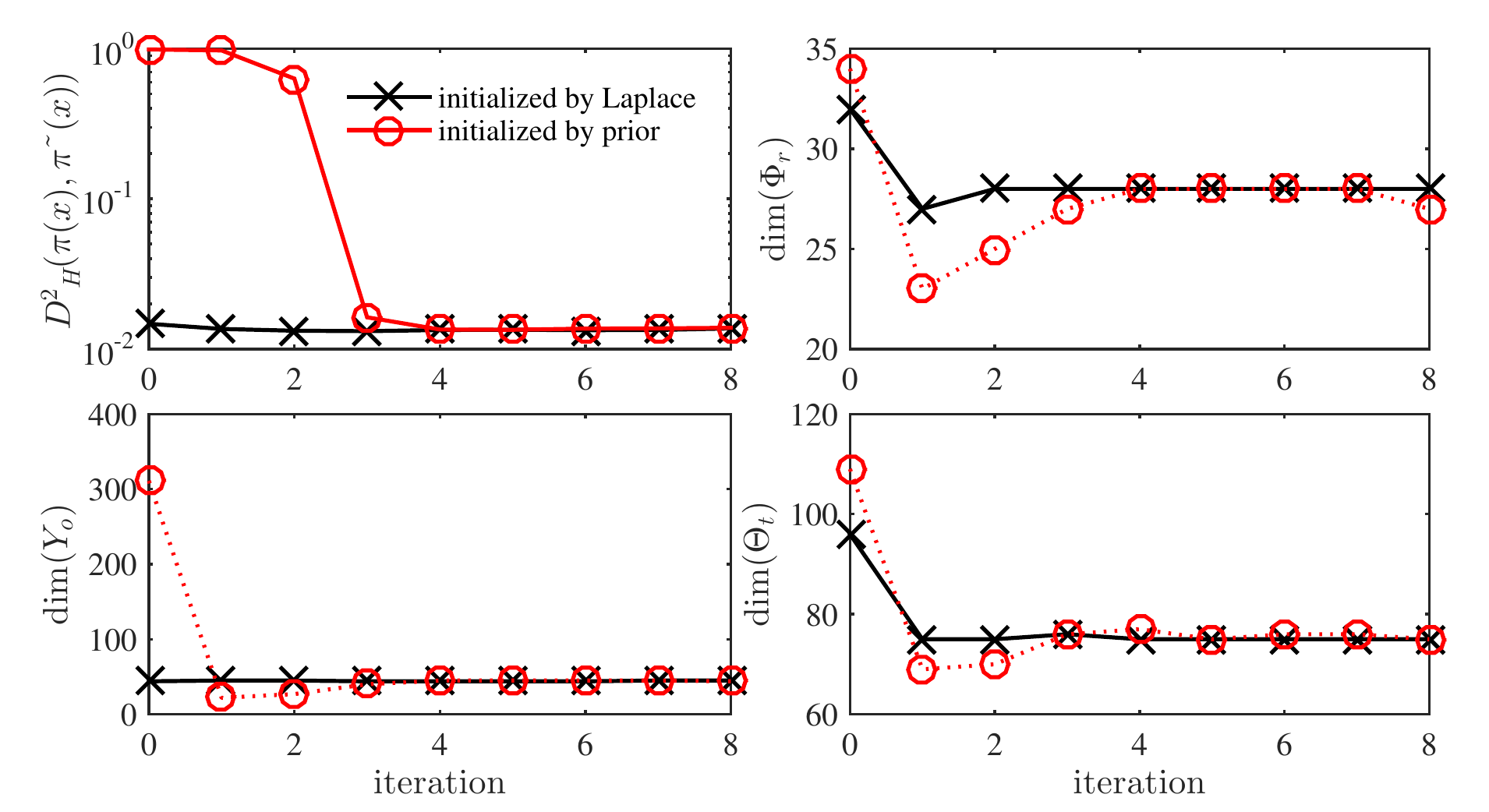}}
  \caption{Iterations of Algorithm \ref{algo:post_joint}, for the
    GOMOS example. The logarithm of the squared Hellinger distance
    \eqref{eq:hell_j}, and the dimensions of the reduced bases
    $\pbasis_r$, $\dbasis_o$, and $\fbasis_t$ are shown. We show
    iteration histories beginning with two different initial
    distributions: the prior and the Laplace approximation
    \eqref{eq:lap}.}
  \label{fig:gomos_joint}
\end{figure}

As in the previous comparisons, we use the (squared) Hellinger
distance between the full posterior $\post(\param \vert \data)$ and
the jointly-approximated posterior $\japost(\param \vert \data)$,
\begin{equation}
  D_{\rm H}^2 \left( \post(\param), \japost(\param) \right) = \frac12 \int_{\paramspace} \left( \sqrt{\post(\param \vert \data)} - \sqrt{\japost(\param \vert \data)} \right)^2 d\param,
  \label{eq:hell_j}
\end{equation}
as the error measure.
We also compare the dimension of reduced parameter basis $\pbasis_r$,
the dimension of the DEIM basis $\fbasis_t$, and the dimension of the
reduced data basis $\dbasis_o$.
The results are shown in Figure \ref{fig:gomos_joint}.
In this example, when the Laplace approximation is used as the initial
distribution, the Hellinger distance \eqref{eq:hell_j} remains flat
for all iterations, and the dimensions of the various reduced bases
stabilize in the first iteration.
In contrast, when the prior distribution is used as the initial
distribution, the algorithm stabilizes at iteration 4, and the
Hellinger distance \eqref{eq:hell_j} is rather large in the first
three iterations ($> 0.9$).
We recall that the first iteration of Algorithm \ref{algo:post_joint}
generates a jointly-approximated posterior corresponding to either
\prdi \ or \lapdi, depending on the initial distribution.
This difference in initial errors also shows that the \lapdi \
strategy can (by itself) be useful for constructing a
jointly-approximate posterior in this case, while the \prdi \ strategy
is not able to provide an accurate approximation.

In Figure \ref{fig:gomos_marginal}, we also plot the full posterior
and the jointly-approximated posteriors generated by \podi \ (at
iteration 8), \lapdi, and \prdi, as well as the Laplace approximation
\eqref{eq:lap}, marginalized onto the first eight KL basis functions.
\updated{
Here the full posterior is sampled by the DILI MCMC algorithm of
\cite{MCMC:CLM_2014}, which is an exact sampling method.}
Both \lapdi \ and \podi \ yield marginal distributions that are almost
identical to those of the full posterior, whereas the marginals of
\prdi \ demonstrate large discrepancies with the full posterior.
Overall, for this example, running \lapdi \ is the most
computationally efficient way to generate the jointly-approximated
posterior. By running an additional iteration of \podi, the dimensions
of the reduced bases can be further reduced without loss of accuracy.

\begin{figure}[h!]
  \centerline{\includegraphics[width=\textwidth]{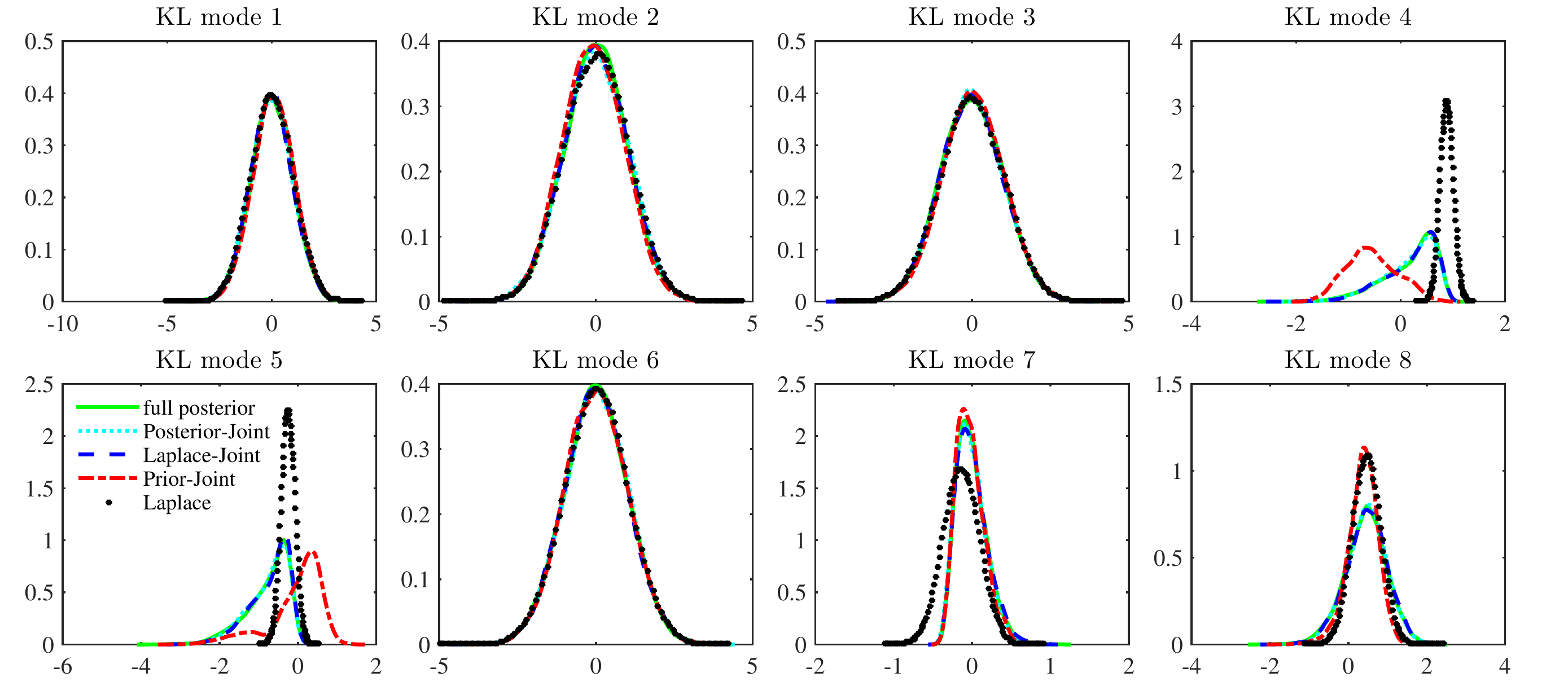}}
  \caption{Marginal posterior distributions for the GOMOS example: the
    full posterior and the jointly-approximated posteriors generated
    by \podi, \lapdi, and \prdi, along with the Laplace approximation
    \eqref{eq:lap}. We show marginal distributions along each of the
    first eight KL modes.}
  \label{fig:gomos_marginal}
\end{figure}


\section{Example 2: groundwater aquifer inversion}
\label{sec:elliptic}

Our second example is an elliptic PDE coefficient inverse problem. In
physical terms, our problem setup corresponds to inferring the
transmissivity field of a two-dimensional groundwater aquifer from
partial observations of the stationary drawdown field of the
watertable, measured from well bores.

\subsection{Problem setup}
\label{sec:ellpitic_setup}
Consider a three kilometer by one kilometer problem domain
$\Omega = [0\,\meter, 3000\,\meter]\times [0\,\meter, 1000\,\meter]$,
with boundary $\partial \Omega$. We denote the spatial coordinate by
$\zeta \in \Omega$.
Consider the transmissivity field $T(\zeta)$ (units
[$\meter^2/\unittime$]), the drawdown field $\state(\zeta)$ (units
[$\meter$]), and sink/source terms $q(\zeta)$ (units
[$\meter/\unittime$]).
The drawdown field for a given transitivity and source/sink
configuration is governed by the elliptic equation:
\begin{equation}
  -\nabla \cdot \left( T(\zeta) \nabla \state(\zeta) \right) = q(\zeta), \quad \zeta \in \Omega .
  \label{eq:forward_e}
\end{equation}
We prescribe the drawdown value to be zero on the boundary (i.e., a
Dirichlet boundary condition), and define the source/sink term
$q(\zeta)$ as the superposition of four weighted Gaussian plumes with
standard width $50$ meters. The plumes are centered near the four
corners of the domain (at $[20\,\meter, 20\,\meter]$,
$[2980\,\meter, 20\,\meter]$, $[2980\,\meter, 980\,\meter]$ and
$[20\,\meter, 980\,\meter]$) with magnitudes of --3000, 2000, 4000,
and --300 [$\meter/\unittime$], respectively.
We solve \eqref{eq:forward_e} by a finite element method.

The discretized transitivity field $T(\zeta)$ is endowed with a
log-normal prior distribution, i.e.,
\begin{equation}
  \label{eq:prior_e}
  T = \exp(\param), \; {\rm and} \; \param \sim \normal\left(\prmean, \prcov \right),
\end{equation}
where the prior mean is set to $\log( 1000\,[\meter/\unittime] )$ and
the inverse of the covariance matrix $\prcov^{-1}$ is defined through
the discretization of an Laplace-like stochastic partial differential
equation \cite{MRF:LRL_2011},
\begin{equation}
  \label{eq:corr}
  ( -\nabla \cdot K \nabla + \kappa^2 ) \param(\zeta) = \mathcal{W}(\zeta),
\end{equation}
where $\mathcal{W}(\zeta)$ is white noise.
As discussed in \cite{IP:Stuart_2010}, this way of defining the
precision operator of a Gaussian prior is discretization invariant,
i.e., the posterior distribution will converge to its functional limit
under grid refinement.
In this example, we set the stationary, anisotropic correlation tensor
$K$ to
\[
K = \left[\begin{array}{rr} 0.55 & -0.45\\ -0.45 &
    0.55 \end{array}\right],
\]
and put $\kappa = 50$.
The ``true'' transmissivity field is a realization from the prior
distribution. The true transmissivity field, the sources/sinks, the
simulated drawdown field, and the synthetic data are shown in Figure
\ref{fig:setup_e}.
Partial observations of the pressure field are collected at $d=13$
sensors whose locations are depicted by black dots in Figure
\ref{fig:setup_e}(c).  The observation operator $\omodel$ is simply
the corresponding ``mask'' operation.  This yields observed data
$\data \in \real^{13}$ as
\[
\data = \omodel \state(\zeta) + \error ,
\]
with additive error $\error \sim \normal(0, \sigma^2 I_{13})$.
\updated{The standard deviation $\sigma$ of the measurement noise is prescribed
so that the observations have signal-to-noise ratio $120$, where the
signal-to-noise ratio is defined as
$\mathbb{V}{\rm ar}(\data)/\sigma^2$.}
The noisy data are shown in Figure \ref{fig:setup_e}(d).

\begin{figure}[h!]
  \centerline{\includegraphics[width=0.8\textwidth]{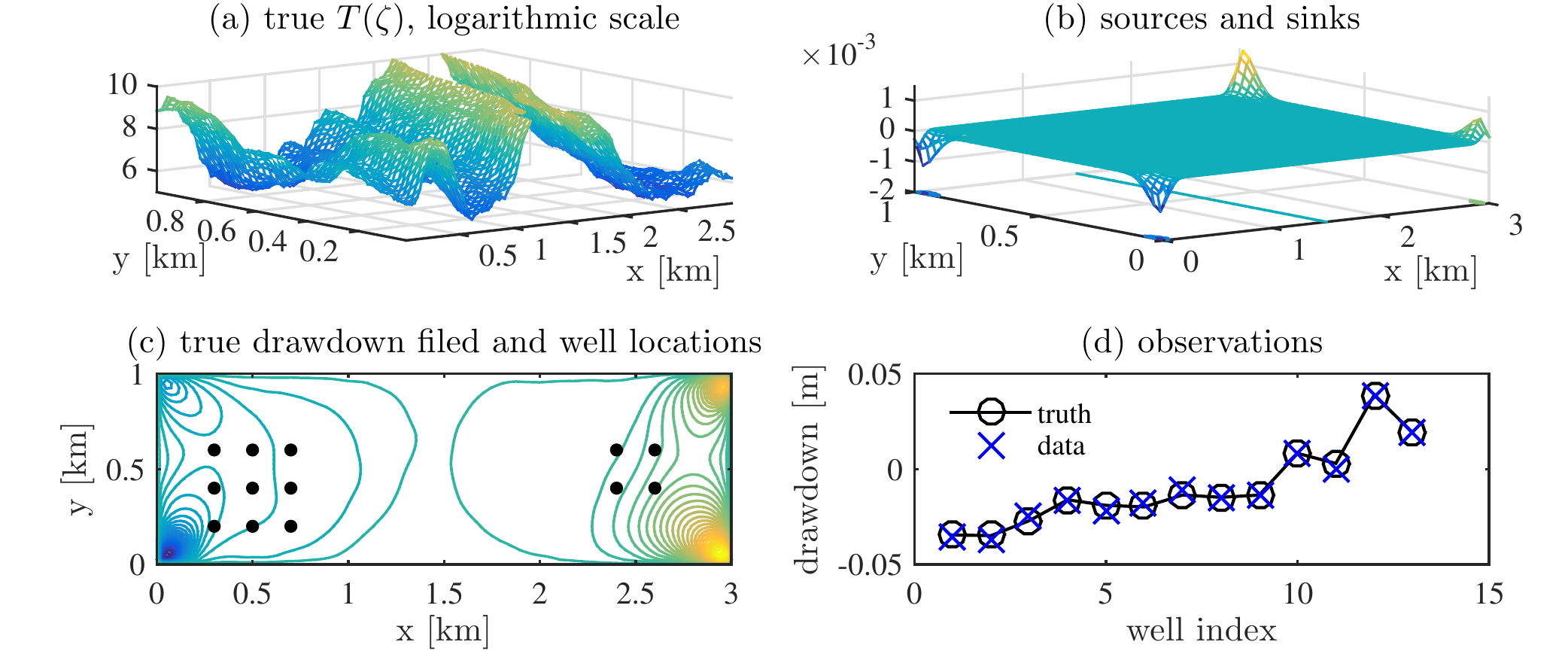}}
  \caption{Setup of the groundwater inversion example. (a) ``True''
    transmissivity field. (b) Sources and sinks. (c) Drawdown field
    resulting from the true transmissivity field, with observation
    wells indicated by black dots. (d) Data $\data$; circles represent
    the noise-free drawdowns at each well, while crosses represent the
    observed drawdowns corrupted with measurement noise.}
  \label{fig:setup_e}
\end{figure}

In this example, the finite element discretization of
\eqref{eq:forward_e} uses $120\times 40$ bilinear elements to
represent the drawndown field $\state(\zeta)$, while the
transmissivity field $T(\zeta)$ is modeled as piecewise constant for
each element. This yields the discretized system of equations
\begin{equation}
  \lmodel(T) \state  = q, \quad T = \exp(x), \quad {\rm and} \quad \mout = \omodel
  \state
  \label{eq:forward_e_d}
\end{equation}
where the discretized state $\state$ has dimension $m = 4961$, while
the transmissivity field $T$ and parameter $\param$ are of dimension
$n = 4800$.
Here, the matrix $\lmodel(T)$ can be expressed as
\[
\lmodel(T) = \sum_{i = 1}^{n} \lmodel_i \exp(x_i),
\]
where $\lmodel_i \in \real^{m \times m}$, $i = 1,\ldots,n$ are
parameter-independent matrices.

Model reduction for \eqref{eq:forward_e_d} consists of two
steps. Beginning with a reduced parameter basis $\pbasis_r$, we first
construct a DEIM interpolant for the log-normal process
$T \approx \exp( \pbasis_r \param_r ) $.
Given a basis $\fbasis_t$ that spans the subspace capturing variations
of $\exp( \pbasis_r \param_r )$, and the associated masking matrix
$\mask_t$, DEIM interpolation takes the form
\[
\exp( \pbasis_r \param_r ) \approx \fbasis_t^{} \alpha(\param_r^{}),
\quad {\rm where} \quad \alpha(\param_r^{}) = (\mask_t^\top
\fbasis_t^{} )^{-1} \exp( \mask_t^\top \pbasis_r^{} \param_r^{} ) .
\]
Using MATLAB notation, for a given reduced parameter $\param_r$, the
matrix $\lmodel(T)$ can be rewritten as
\begin{equation}
  \lmodel(T)  = \sum_{j = 1}^{t} \left( \sum_{i = 1}^{n} \lmodel_i^{} \fbasis_t^{}(i,j)
  \right) \alpha_j^{}(\param_r^{}),
  \label{eq:affine_e}
\end{equation}
where $\alpha_j(\param_r^{})$ is the $j$th component of the
vector-valued function $\alpha$.
In the second step, given a reduced state basis $\sbasis_s$, we
approximate the state by $\state \approx \sbasis_s \state_s$ and apply
Galerkin projection, yielding a reduced linear system
\[
\sbasis_s^\top \lmodel(T) \sbasis_s^{} \state_s^{} = \sbasis_s^\top q.
\]
Substituting \eqref{eq:affine_e} into the above equation, the reduced
order model can be written as
\[
\lmodel_s(\param_s) \state_s = q_s,
\]
where
\[
\lmodel_s(\param_s) = \sum_{j = 1}^{t} \sbasis_s^\top \left( \sum_{i =
    1}^{n} L_i^{} \fbasis_t^{}(i,j) \right) \sbasis_s^{}
\alpha_j^{}(\param_r^{}) \quad {\rm and} \quad q_s^{} = \sbasis_s^\top
q,
\]
and the associated reduced observation model is
$\mout = (\omodel \sbasis_s) \state_s$.
The computational cost of this reduced order model is dictated by the
dimension of the reduced parameter subspace, the reduced state
subspace, and the DEIM basis, and is independent of the dimension of
the original model.

\subsection{Numerical results}

We run the same set of tests as in the GOMOS example.
The Hellinger distance \eqref{eq:hell_p} is used to evaluate the
errors induced by various parameter reduction methods, versus the
number of basis vectors used in the parameter-approximated posterior.
The results are shown in Figure \ref{fig:elliptic_param_dh}.
As in the GOMOS example, the Laplace approximation (dashed line) has a
rather large Hellinger distance from the posterior.
Prior-KL parameter reduction (triangles) converges rather slowly
relative to the likelihood-informed methods; it outperforms the
Laplace approximation only after 25 or more basis vectors are
included.
Prior-LIPS (squares) converges more quickly than Prior-KL, but it is
outperformed by the other two likelihood-informed methods, which use
Hessians at parameter values drawn from approximations of the
posterior.
The convergence curves of Posterior-LIPS (crosses) and Laplace-LIPS
(diamonds) are almost identical.

\begin{figure}[h!]
  \centerline{\includegraphics[width=0.7\textwidth]{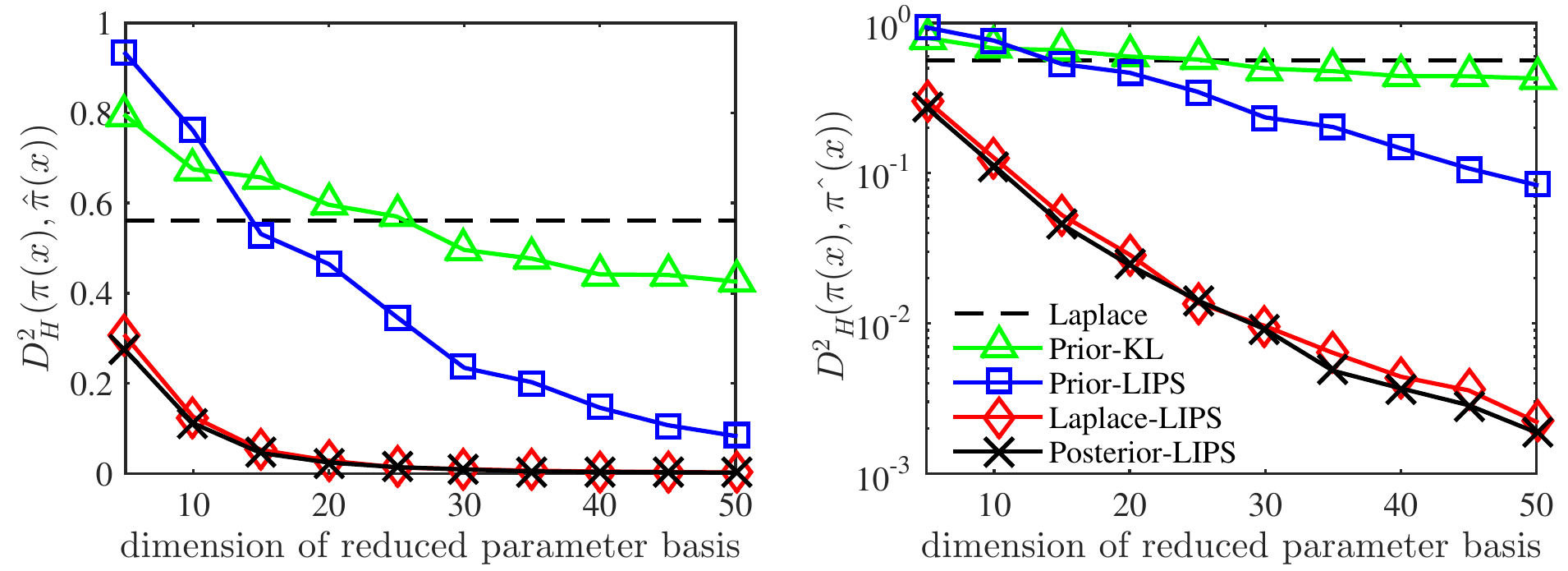}}
  \caption{Convergence of various \textit{parameter reduction}
    methods, for the elliptic PDE example: squared Hellinger distances
    \eqref{eq:hell_p} versus the dimensions of the reduced parameter
    bases $\pbasis_r$ defined by Posterior-LIPS, Laplace-LIPS,
    Prior-LIPS, and Prior-KL. The dashed line represents the squared
    Hellinger distance between the full posterior and its Laplace
    approximation \eqref{eq:lap}.}
  \label{fig:elliptic_param_dh}
\end{figure}

Given a 17-dimensional likelihood-informed parameter basis built via
Posterior-LIPS, we now use the Hellinger distance \eqref{eq:hell_m} to
evaluate the convergence of various state reduction methods, as a
function of the state dimension of the reduced-order model.
Because we have a non-smooth prior in this example, a rather
high-dimensional DEIM basis is required to ensure positivity in
approximating the log-normal process $\exp( \pbasis_r \param_r )$.
Four DEIM approximations of the log-normal process, with reduced bases
$\Theta_t$ having dimensions $250$, $350$, $450$, and $550$ are used
in this benchmark.
The results are shown in Figure \ref{fig:elliptic_rom_dh}.
In this test, Posterior-LISS is about one order of magnitude more
accurate than Laplace-LISS, and Laplace-LISS is about one order of
magnitude more accurate than Prior-POD.
We also note that Posterior-LISS generates the most stable convergence
curves among all the methods tested here.

\begin{figure}[h!]
  \centerline{\includegraphics[width=\textwidth]{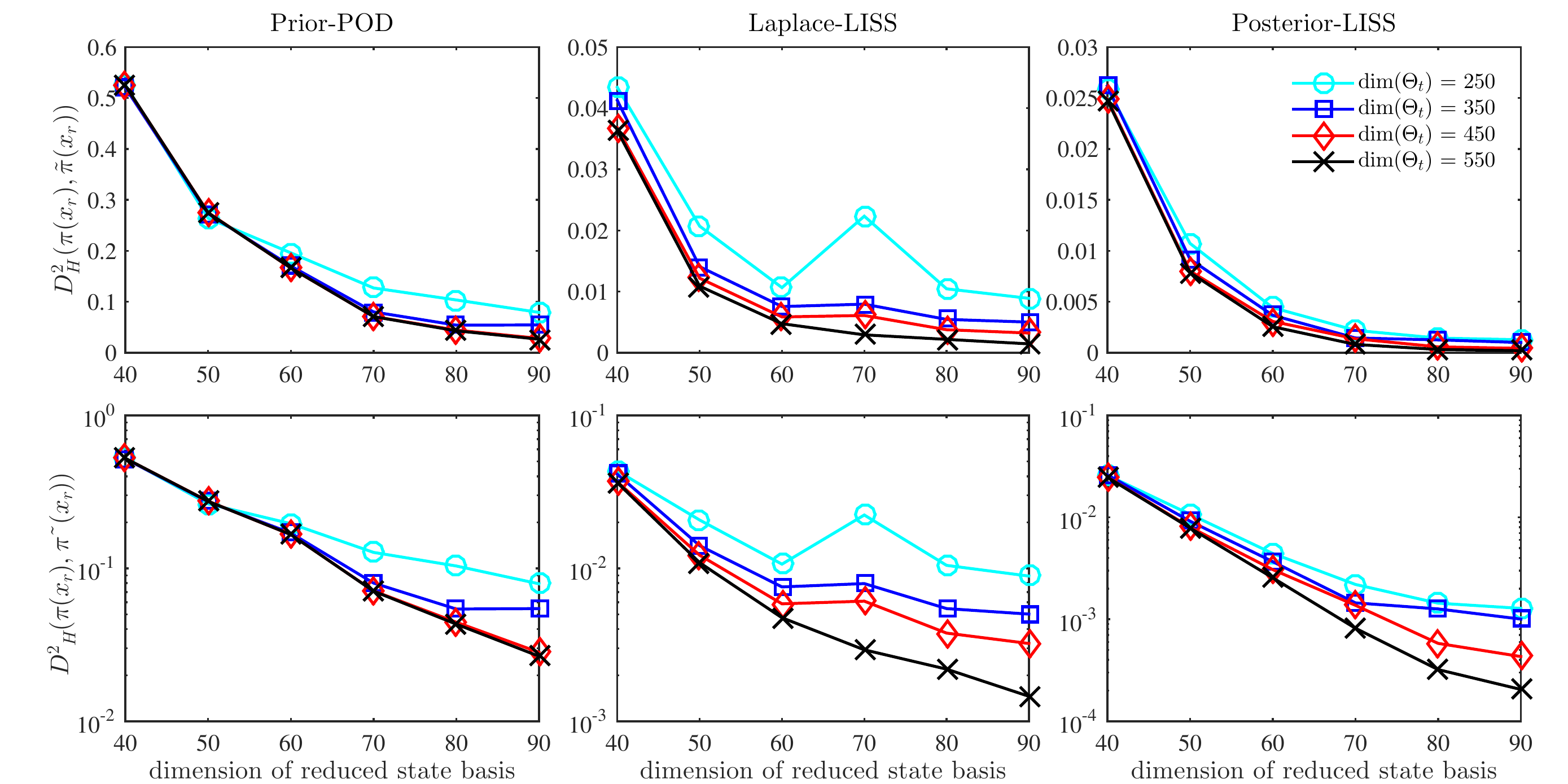}}
  \caption{Convergence of various \textit{state reduction} methods,
    for the elliptic PDE example. Squared Hellinger distances
    \eqref{eq:hell_m} versus the dimension of reduced state bases
    $\sbasis_s$ defined by Prior-POD, Laplace-LISS, and Posterior-LIS
    are shown in the left, middle, and right columns, respectively.
    Four different DEIM approximations of the exponential function
    $\exp( \pbasis_r \param_r )$, with dimensions $250$, $350$, $450$
    and $550$, are shown for each case. Again, errors are plotted with
    linear (top) and logarithmic (bottom) scales.}
  \label{fig:elliptic_rom_dh}
\end{figure}

We now demonstrate eight iterations of the \podi\ strategy (Algorithm
\ref{algo:post_joint}) using either the prior or the Laplace
approximation \eqref{eq:lap} as initial distributions.
The reduced parameter basis is truncated to retain components above
the eigenvalue threshold $\tau_{\rm g} = 0.05$, the DEIM basis
$\fbasis_t$ for interpolating the function
$\exp( \pbasis_r \param_r )$ is truncated to retain eigenvalues above
$10^{-14}$, and the reduced state basis is truncated to retain
eigenvalues above $10^{-6}$.
\updated{In this example, the numbers of forward model and GNH--action
  evaluations used in each iteration of \podi, or in a full run of
  \lapdi \ or \prdi, are the same as those used in the remote sensing
  example.
  We note that the action of the GNH in this case can be computed much
  more quickly than in the remote sensing case: the elliptic PDE
  forward model is self-adjoint, and hence the solution of the linear
  system in the forward model can be recycled to compute the action of
  the GNH without additional linear solves.
  Using the prior mean as the initial guess, the subspace trust region
  method requires 21 inexact Newton iterations to find the MAP. Each
  Newton iteration involves one forward model evaluation and the
  computation of the GNH action on an average of 5 directions. }

\begin{figure}[h!]
  \centerline{\includegraphics[width=0.7\textwidth]{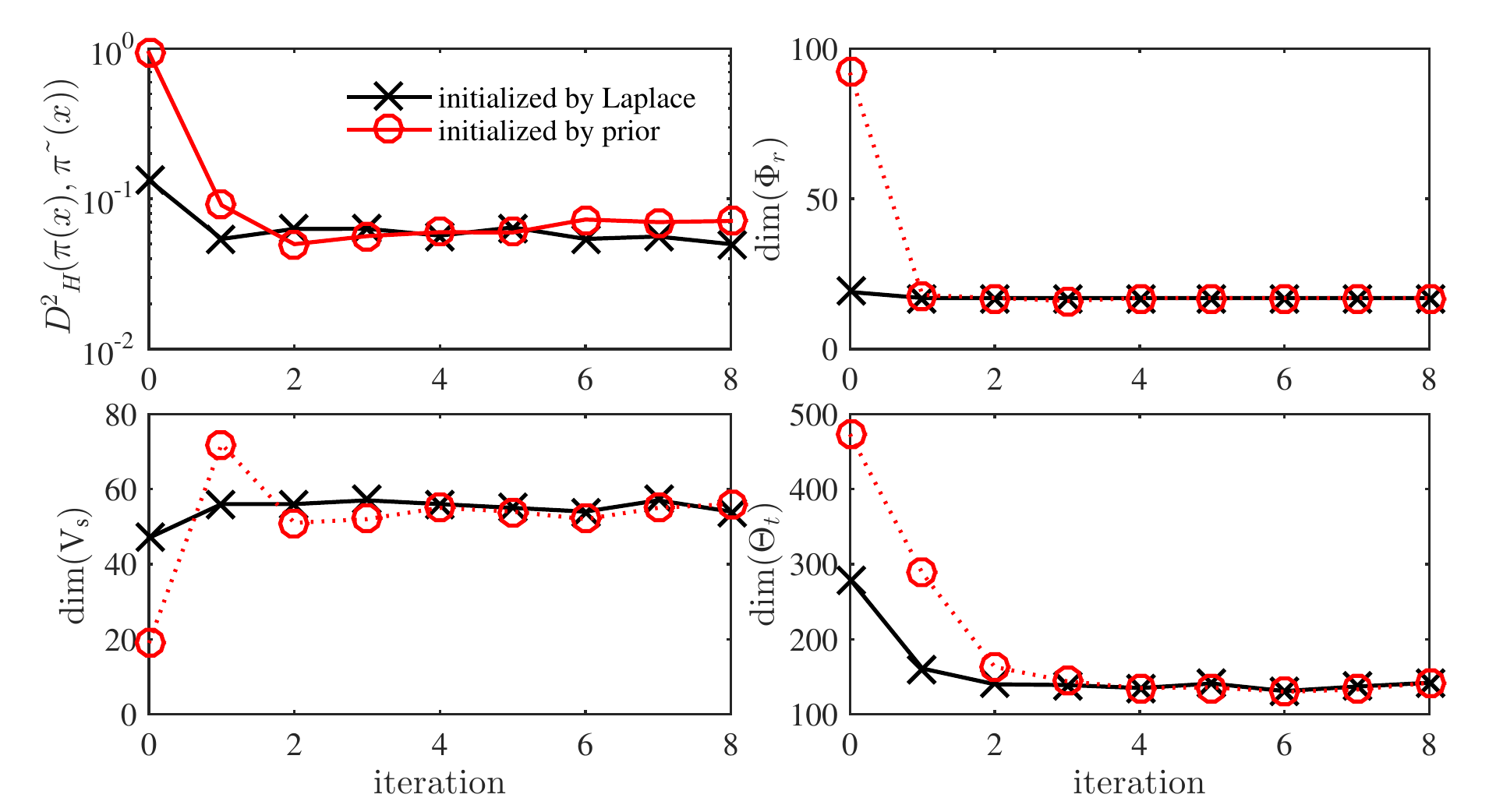}}
  \caption{Iterations of Algorithm \ref{algo:post_joint}, for the
    elliptic PDE example. For each iteration, we show the logarithm of
    the squared Hellinger distance \eqref{eq:hell_j}, and the
    dimension of the reduced parameter basis $\pbasis_r$, the reduced
    state basis $\sbasis_s$, and the DEIM basis $\fbasis_t$. We show
    iteration histories beginning with two different initial sampling
    distributions: the prior and the Laplace approximation
    \eqref{eq:lap}.}
  \label{fig:elliptic_joint}
\end{figure}

As in the GOMOS case, the Hellinger distance \eqref{eq:hell_j} from
the full posterior is used to evaluate the accuracy of the
jointly-approximated posterior. We also show the evolution of the
dimension of the reduced parameter basis $\pbasis_r$, the dimension of
the DEIM basis $\fbasis_t$, and the dimension of the reduced state
basis $\sbasis_s$. Results are shown in Figure
\ref{fig:elliptic_joint}. In this example, when the Laplace
approximation is used to initiate parameter sampling, the Hellinger
distance \eqref{eq:hell_j} drops slightly at iteration 1, and then
stays almost at a constant level for the remaining iterations.
The dimensions of various reduced bases also stabilize after the first
iteration. When the prior distribution is used to initiate parameter
sampling, the algorithm stabilizes at iteration 2, but the Hellinger
distance \eqref{eq:hell_j} is rather large in the first few iterations
($> 0.9$). We recall that the first iteration of Algorithm
\ref{algo:post_joint} generates a jointly-approximated posterior from
either \prdi~or \lapdi, depending on the initial distribution.
In Figure \ref{fig:elliptic_marginal}, we plot the full posterior and
the jointly-approximated posteriors generated by \podi, \lapdi, and
\prdi, as well as the Laplace approximation \eqref{eq:lap},
marginalized onto the first four KL bases.
Here the full posterior is sampled by the DILI method of
\cite{MCMC:CLM_2014}.
Both \lapdi\ and \podi\ have marginal distributions that are almost
identical to those of the full posterior, whereas the marginals of
\prdi~demonstrate significant discrepancies with the full posterior.

Overall, for this example, running \lapdi~is the most computationally
efficient way to construct the jointly-approximated posterior.
Beginning with this approximation and running an additional iteration
of \podi, the accuracy of the jointly-approximated posterior can be
further improved.

\begin{figure}[h!]
  \centerline{\includegraphics[width=\textwidth]{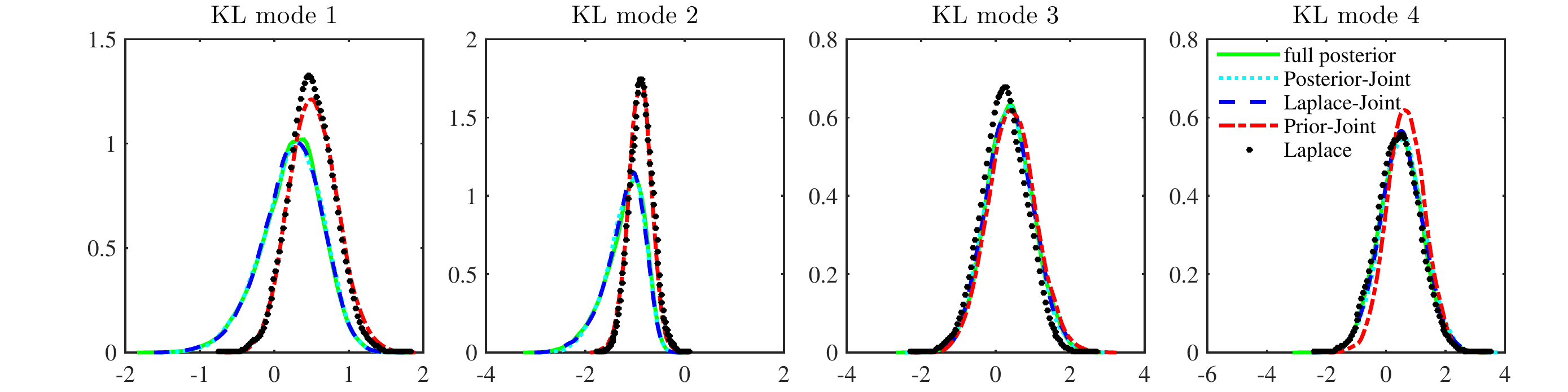}}
  \caption{Marginal posterior distributions for the elliptic PDE
    example: the full posterior and the jointly-approximated
    posteriors generated by \podi, \lapdi, and \prdi, along with the
    Laplace approximation \eqref{eq:lap}. We show marginal
    distributions along each of the first four KL modes.}
  \label{fig:elliptic_marginal}
\end{figure}


%
\section{Conclusions}
\label{sec:conclusions}

This paper addresses two major computational challenges in the
Bayesian solution of inverse problems: the high dimensionality of 
parameters and the large computational cost of forward model
evaluations. For many MCMC methods, the cost of sampling scales poorly
with the former, while the latter makes repeated evaluations of the
posterior density computationally prohibitive. We thus propose a
likelihood-informed approach for identifying and exploiting
low-dimensional structure in both the parameter space and the model
state space. The resulting \textit{jointly-approximated posterior} can 
be characterized using only low-dimensional samplers and inexpensive
evaluations of a reduced-order model. The computational cost of
computing posterior expectations then scales with the dimension of
the reduced parameter and state spaces, which reflect the intrinsic
complexities of the inverse problem---e.g., how large is the parameter
subspace capturing significant changes from prior to posterior, and
what variations in the forward model state are induced by the parameter
distribution on this subspace.

Previous work on the approximation of Bayesian inverse problems has
shown that it is useful to perform some kind of parameter reduction
before model reduction \cite{SuMo:MarNa_2009, ROM:LWG_2010}. It has
also been demonstrated that focusing attention on a particular region
of the parameter space---in particular, the region of high posterior
probability---enables the construction of more accurate reduced-order
models for the purpose of Bayesian inference \cite{ROM:CMW_2014}. The
present work uses locality in \textit{both} of these senses. We reduce
parameter dimension by ``filtering out'' directions of prior
variability that are irrelevant to the prior-to-posterior update,
i.e., directions that the likelihood does not inform. Within the
remaining parameter subspace, we focus on the region of parameter
values that is consistent with the data, i.e., that has high posterior
probability. Locality in dimension and in parameter range then yields
smaller variations in the model state, which are better captured by a
reduced-order model. The resulting jointly-approximated posterior can
be constructed for systems with high-dimensional parameters and
states, and is {accurate} in the stringent sense of Hellinger
distance from the true posterior.
As a byproduct of this reduction procedure, we are also able to
reduce the cost of handling high-dimensional data, by exploiting
low-dimensional structure and sparsity in the data space.

Within this framework, we introduce several alternative strategies for
constructing two key building blocks of the posterior approximation:
the likelihood-informed parameter subspace (LIPS) and the associated
likelihood-informed state subspace (LISS). One interesting aspect of
these strategies is the choice of reference distribution; options
include the prior, a Laplace approximation of the posterior, or the
posterior itself. Our numerical examples demonstrate that
posterior-focused reference distributions, while ``online'' in the
sense of depending on the data, yield the most accurate approximations
for a given subspace dimension. All of these strategies contain many
highly parallelizable computations.

While the present work has used snapshot-style approaches to
constructing appropriate subspaces of the parameter space and state
space, future work might use optimization on matrix manifolds to
directly search for optimal bases in nonlinear settings. It may also
be useful to extend beyond the essentially linear dimension reduction
strategies employed here to identify nonlinear manifolds that better
capture the prior-to-posterior update and associated variations of the
model state.
\updated{Finally, we note that our current approximations focus on
  inverse problems with nonlinear forward models but prescribed
  Gaussian priors. It is natural to consider generalizations to
  hierarchical Gaussian priors, where the prior is not precisely fixed
  but rather its mean and precision/covariance are controlled by
  additional hyperparameters. The posterior approximations developed
  here may be useful building blocks in this hierarchical Bayesian
  setting. }

%

\section*{Acknowledgments}

We thank Marko Laine and Johanna Tamminen from the Finnish
Meteorological Institute for providing us with the GOMOS model that
served as the baseline for our implementation of the remote sensing
example.
This work was supported by the US Department of Energy, Office of
Advanced Scientific Computing (ASCR), under grant number
DE-SC0009297.

\section*{References}

\bibliographystyle{model1-num-names}
\bibliography{bib_2015}

\end{document}